\newcommand{\w} {\mbox{W}}
\newcommand{\un} {{\it und}}
\newcommand{\OCN} {\text{OCN}}
\newcommand{\free} {\text{Free}}
\newcommand{\OP} {\text{OP}}
\newcommand{\OCI} {\text{OCI}}
\newtheorem{theorem}{Theorem}[section]
\newtheorem{example}[theorem]{Example}
\newtheorem{lemma}[theorem]{Lemma}
\newtheorem{definition}[theorem]{Definition}
\newtheorem{proposition}[theorem]{Proposition}
\newtheorem{observation}[theorem]{Observation}
\newtheorem{corollary}[theorem]{Corollary}
\newenvironment{proof}{\noindent{\bf Proof~}}{\null\hfill $\Box$\par\medskip}
\definecolor{light-gray}{gray}{0.5}
\begin{document}

\title{Oriented coloring on recursively defined digraphs}

\author[1]{Frank Gurski}
\author[1]{Dominique Komander}
\author[1]{Carolin Rehs}

\affil[1]{\small University of  D\"usseldorf,
Institute of Computer Science, Algorithmics for Hard Problems Group,\newline 
40225 D\"usseldorf, Germany}

\maketitle

\begin{abstract}
Coloring is one of the most famous problems in graph theory. The coloring
problem on undirected graphs has been well studied, whereas there are very
few results for coloring problems on directed graphs.
An oriented $k$-coloring of an oriented graph $G=(V,A)$ is a partition of the
vertex set $V$ into $k$ independent sets such that all the arcs linking
two of these subsets have the same direction. The oriented chromatic
number of an oriented graph $G$ is the  smallest $k$ such that $G$
allows an oriented $k$-coloring.
Deciding whether an acyclic digraph allows an oriented $4$-coloring
is NP-hard. It follows that finding the chromatic number of an oriented
graph is an NP-hard problem, too. This~motivates to consider the problem
on  oriented co-graphs. After giving several characterizations for this graph class, we show
a linear
time algorithm which computes an optimal oriented coloring for an oriented
co-graph. We further prove how the oriented chromatic number can be computed
for the disjoint union and order composition from the oriented chromatic number
of the involved oriented co-graphs. It turns out that within oriented co-graphs the
oriented chromatic number is equal to the length of a longest oriented path plus one.
We also show that the
graph isomorphism problem on oriented co-graphs can be solved in linear
time.

\bigskip
\noindent
{\bf Keywords:} 
oriented graphs; oriented co-graphs; oriented coloring; graph isomorphism
\end{abstract}

\section{Introduction} \label{sec-intro}

Graph coloring is one of the basic problems in graph theory, which has already been considered
in the 19th century. A $k$-coloring for an undirected graph $G$ is a $k$-labeling of the vertices
of $G$ such that no two adjacent vertices have the same label. The smallest $k$ such that a graph
$G$ has a $k$-coloring is named the chromatic number of $G$.
As even the problem whether a graph has a $3$-coloring, is NP-complete, finding the chromatic
number of an undirected graph is an NP-hard problem. However, there are many efficient solutions
for the coloring
problem on special graph classes, like chordal graphs \cite{Gol80},  comparability graphs \cite{Hoa94},
and co-graphs \cite{CLS81}.

Oriented coloring has been introduced much later by Courcelle \cite{Cou94}. One could easily apply the definition
of graph coloring to directed graphs, but as this would not take the direction of the arcs into
account, this would not be very interesting. For such a definition, the coloring of a directed graph
would be the coloring of the underlying undirected graph.

Oriented coloring also considers  the direction of the arcs. An oriented $k$-coloring of an
oriented graph $G=(V,A)$ is a partition of the vertex set $V$ into $k$ independent sets, such
that all the arcs linking two of these subsets have the same direction. In the oriented chromatic
number problem ($\OCN$ for short) there is given some oriented graph $G$ and some integer $c$ and 
one has to
decide whether there is an  oriented $c$-coloring for $G$. Even the restricted
problem, when $c$ is constant and does not belong to the input  ($\OCN_{c}$ for short), is hard.
$\OCN_{4}$ is NP-complete even for DAGs \cite{CD06}, whereas the undirected problem is
easy for trees.

Right now, the definition of oriented coloring is mostly considered for undirected graphs.
There~the maximum value $\chi_o(G')$ of all possible orientations $G'$ of an undirected graph
$G$ is considered.
For~several special undirected graph classes the oriented chromatic number has been bounded.
Among~these are outerplanar graphs \cite{Sop97}, planar graphs \cite{Mar13}, and
Halin graphs \cite{DS14}. In \cite{Gan09},
Ganian has shown  an FPT-algorithm for $\OCN$ w.r.t.\ the parameter
tree-width (of the underlying undirected graph).
Further, he has shown that $\OCN$ is DET-hard (DET is the class of decision
problems which are reducible in logarithmic space to 
the problem of computing the determinant of an integer valued $n\times n$-matrix.) for classes of
oriented graphs, such that the underlying undirected class has bounded rank-width.

\begin{figure}[ht]
\begin{center}
 {\includegraphics[width=1.7cm]{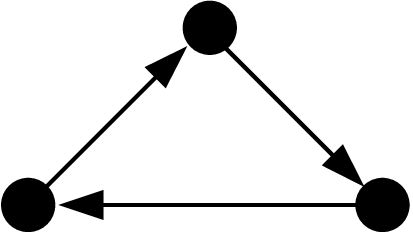}} ~~~~~~~~~ {\includegraphics[width=1.7cm]{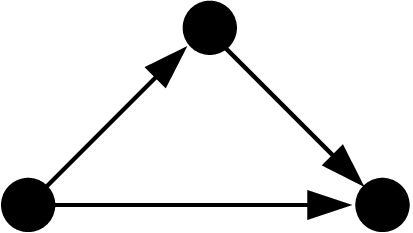}} 
\end{center}
\caption{Special oriented graphs: oriented cycle  $\protect\overrightarrow{C_3}$ and transitive tournament $\protect\overrightarrow{T_3}$.}\label{ex-spe}
\end{figure}

Oriented coloring of special digraph classes seems not to be investigated up to now.
The main reason is that the oriented chromatic number of the disjoint union of two oriented graphs can
be larger than the maximum oriented chromatic number of the involved graphs
(cf.\ Figure \ref{ex-spe} and Example~\ref{ex-disj-u2}). In this paper, we
consider the oriented coloring problem restricted to oriented co-graphs, which
are obtained from directed co-graphs \cite{BGR97} by omitting the series operation.
Oriented co-graphs were already analyzed by Lawler
in \cite{Law76} and \cite{CLS81}~(Section 5) using the notation 
of transitive series parallel (TSP) digraphs.
We give several characterizations for oriented co-graphs and show that for oriented co-graphs, the
oriented chromatic number of the disjoint union of oriented graphs
is equal to the maximum oriented chromatic number of the involved graphs. Further,
we show that for every oriented graph the oriented chromatic number of the order composition
of oriented graphs is equal to the sum of the oriented chromatic numbers of the involved graphs.
To show this, we introduce an algorithm that computes an optimal oriented coloring and thus, the
oriented chromatic number of oriented co-graphs in linear time.
We also consider the longest oriented path problem on oriented co-graphs.
It turns out that within oriented co-graphs the
oriented chromatic number is equal to the length of a longest oriented path plus one.
Further, we give a linear time  algorithm for the graph isomorphism problem on oriented co-graphs.
Since oriented co-graphs have a directed NLC-width of one \cite{GWY16},
our results provide a useful basis for exploring the complexity of $\OCN$ related to width parameters
(cf.~Section~\ref{sec-con}).

\section{Preliminaries}\label{intro}

\subsection{Graphs and Digraphs}

We use the notations of Bang-Jensen and Gutin \cite{BG09} for graphs and digraphs.

For some given digraph $G=(V,A)$, we define
its underlying undirected graph by ignoring the directions of the edges, i.e.
$\un(G)=(V,\{\{u,v\}~|~(u,v)\in A, u,v\in V\})$
and for some class of digraphs $X$, let
$\un(X)=\{\un(G)~|~ G\in X\}$.
For some (di)graph class $F$ we define
$\free(F)$ as the set of all (di)graphs $G$, such that no induced sub(di)graph 
of $G$ is isomorphic to a member of  $F$.

An {\em oriented graph} is a digraph with no loops and no opposite arcs.
We recall some special oriented graphs.
By $$\overrightarrow{P_n}=(\{v_1,\ldots,v_n\},\{ (v_1,v_2),\ldots, (v_{n-1},v_n)\}),$$ $n \ge 2$,
we denote the oriented path on $n$ vertices,
by $$\overrightarrow{C_n}=(\{v_1,\ldots,v_n\},\{(v_1,v_2),\ldots, (v_{n-1},v_n),(v_n,v_1)\}),$$ $n \ge 3$,
we denote the oriented cycle on $n$ vertices and by
$\overrightarrow{T_n}$ we denote the transitive tournament on $n$~vertices.

\subsection{Undirected  Co-Graphs}

Let $G_1=(V_1,E_1), \ldots, G_k=(V_k,E_k)$ be $k$ vertex-disjoint graphs.
\begin{itemize}
\item
The {\em disjoint union} of $G_1, \ldots, G_k$,
denoted by $G_1 \cup \ldots \cup G_k$,
is the graph with vertex set $V_1\cup \ldots \cup V_k$ and
edge set $E_1\cup \ldots \cup E_k$.

\item
The {\em join composition} of $G_1, \ldots, G_k$,
denoted by $G_1\times \ldots \times G_k$,
is defined by their disjoint union plus all possible edges between
vertices of $G_i$ and $G_j$  for all $1\leq i,j\leq k$, $i\neq j$.
\end{itemize}

The set of all graphs, which can be defined from
a single vertex graph by applying the disjoint union  and join composition,
is characterized as the set of all co-graphs. It is well known that
co-graphs are precisely the $P_4$-free graphs \cite{CLS81}.

\subsection{Undirected Graph Coloring}

\begin{definition}[Graph Coloring]
A {$k$-coloring} of a graph $G=(V,E)$  is a mapping $c:V\to \{1,\ldots,k\}$
such~that:
\begin{itemize}
	\item $c(u)\neq c(v)$ for every $\{u,v\}\in E$
\end{itemize}
The  {chromatic number} of $G$, denoted by $\chi(G)$, is the smallest $k$
such that $G$ has a $k$-coloring.
\end{definition}

On undirected co-graphs, the graph coloring problem is easy to solve
by the following result proven by Corneil et al.:

\begin{lemma}[\cite{CLS81}]\label{colo-und}
Let $G_1, \ldots, G_k$ be $k$ vertex-disjoint graphs.
\begin{enumerate}
\item $\chi(G_1\cup \ldots \cup G_k)=\max(\chi(G_1),\ldots,\chi(G_k))$

\item $\chi(G_1\times \ldots \times  G_k)= \chi(G_1) +\ldots +\chi(G_k)$
\end{enumerate}
\end{lemma}

\begin{proposition} Let $G$ be a co-graph. Then,
$\chi(G)$  can be computed in linear time.
\end{proposition}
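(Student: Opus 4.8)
The plan is to exploit the recursive structure of a co-graph together with Lemma~\ref{colo-und}. First I would recall that every co-graph $G$ admits a \emph{co-tree}: a rooted tree $T$ whose leaves are the vertices of $G$, whose internal nodes are labelled either $\cup$ (disjoint union) or $\times$ (join), and such that $G$ is obtained by evaluating $T$ bottom-up. Such a co-tree can be computed in linear time $\bigo(|V|+|E|)$ by the algorithm of Corneil, Perl and Stewart (or, as phrased in \cite{CLS81}, from the $P_4$-free recognition procedure); I would cite this as a known fact rather than reprove it. Thus the first step of the algorithm is simply: build the co-tree $T$ of $G$.

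Next I would process $T$ in post-order, storing at each node $t$ the value $\chi(G_t)$, where $G_t$ is the co-graph generated by the subtree rooted at $t$. For a leaf, $\chi(G_t)=1$. For an internal $\cup$-node with children $t_1,\ldots,t_k$, Lemma~\ref{colo-und}(1) gives $\chi(G_t)=\max_i \chi(G_{t_i})$; for an internal $\times$-node, Lemma~\ref{colo-und}(2) gives $\chi(G_t)=\sum_i \chi(G_{t_i})$. The value at the root is $\chi(G)$. (If one also wants the colouring itself, the same post-order pass can construct it: union nodes keep the colourings of the children unchanged, while a join node shifts the palette of each child block by the sum of the chromatic numbers of the preceding children so that the blocks use disjoint colour sets; this is exactly the equality case of Lemma~\ref{colo-und}(2).)

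Finally I would argue the running time. Building $T$ is linear. The post-order traversal visits each node once and does work proportional to its number of children, so the total work over all internal nodes is $\bigo(|V(T)|)$; since a co-tree has $|V|$ leaves and hence $\bigo(|V|)$ nodes in total (after suppressing nodes with a single child and merging adjacent equally-labelled nodes, which is standard), this is $\bigo(|V|)$. Hence the whole computation runs in $\bigo(|V|+|E|)$ time, i.e.\ linear time, which proves the proposition.

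The only genuine subtlety — and the step I would flag as the main obstacle if one insisted on a fully self-contained argument — is the linear-time construction of the co-tree; the arithmetic recursion and its correctness are immediate from Lemma~\ref{colo-und}. Since the excerpt already invokes \cite{CLS81} for the structure of co-graphs, I would simply rely on that reference for the co-tree, keeping the proof short.
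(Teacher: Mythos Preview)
Your proposal is correct and matches the paper's intended argument: the paper states this proposition without proof, immediately after Lemma~\ref{colo-und}, as a direct consequence of that lemma together with the linear-time co-tree construction from~\cite{CLS81}. Your write-up simply spells out this implicit reasoning in full, so there is nothing to compare or correct.
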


The undirected coloring problem, i.e., computing $\chi(G)$, can be solved by
an FPT-algorithm w.r.t.\ the  tree-width of the input graph \cite{Gur08c}.
In contrast, this is not true for clique-width, since
it has been shown  in \cite{FGLS09}, that the undirected coloring problem
is $\w[1]$-hard w.r.t.\ the  clique-width of the input graph.
That is, under reasonable assumptions an XP-algorithm is the best one can hope for. Such~algorithms
are known, see \cite{EGW01a}.

\subsection{Directed Co-Graphs}

The following operations for digraphs
have already been considered  by Bechet et al.\ in \cite{BGR97}.
Let~$G_1=(V_1,E_1), \ldots, G_k=(V_k,E_k)$ be $k$ vertex-disjoint digraphs.
\begin{itemize}

\item
The {\em disjoint union} of $G_1, \ldots, G_k$,
denoted by $G_1 \oplus \ldots \oplus G_k$,
is the digraph with vertex set $V_1\cup \ldots \cup V_k$ and
arc set $E_1\cup \ldots \cup E_k$.

\item
The {\em series composition} of $G_1,\ldots, G_k$,
denoted by $G_1\otimes \ldots \otimes G_k$,
is defined by their disjoint union plus all possible arcs between
vertices of $G_i$ and $G_j$ for all $1\leq i,j\leq k$, $i\neq j$.

\item
The {\em order composition} of $G_1, \ldots, G_k$,
denoted by $G_1\oslash \ldots \oslash G_k$,
is defined by their disjoint union plus all possible arcs from
vertices of $G_i$ to vertices of $G_j$ for all $1\leq i < j\leq k$.
\end{itemize}

The set of all
digraphs which can be defined by the disjoint union, series composition, and order composition
is characterized as the set of all directed co-graphs \cite{BGR97}.
Obviously, for every directed co-graph we can define a tree structure,
denoted as the {\em di-co-tree}. The leaves of the di-co-tree represent the
vertices of the graph and the inner nodes of the di-co-tree  correspond
to the operations applied on the subexpressions defined by the subtrees.
For every directed co-graph one can construct a  di-co-tree in linear time,
see \cite{CP06}.

In \cite{BM14} it is shown that the weak $k$-linkage problem can be solved
in polynomial time for directed co-graphs. By the recursive structure
there exist dynamic programming algorithms
to compute
the size of a largest edgeless subdigraph,
the~size of a largest subdigraph which is a tournament,
the size of a largest semicomplete subdigraph, and
the size of a largest complete subdigraph
for every directed co-graph  in linear time. Also
the Hamiltonian path,  Hamiltonian cycle, regular
subdigraph, and directed cut problem are polynomial
on directed co-graphs \cite{Gur17a}.
Calculs of directed co-graphs were also considered in connection
with pomset logic in \cite{Ret98}.
Further, the directed path-width and  directed tree-width
can be computed in linear time for  directed co-graphs \cite{GR18c}.

In \cite{CP06}, it has been shown
that directed co-graphs can be characterized by the eight forbidden induced
subdigraphs shown in Figure  \ref{F-co}.

\begin{figure}
\begin{center}
\begin{tabular}{cccccccc}
{\includegraphics[width=1.8cm]{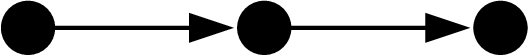}} &~~~&{\includegraphics[width=1.8cm]{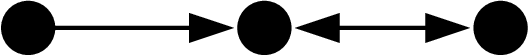}}&~~~&{\includegraphics[width=1.8cm]{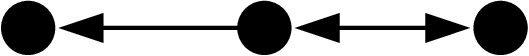}}&~~~&{\includegraphics[width=1.5cm]{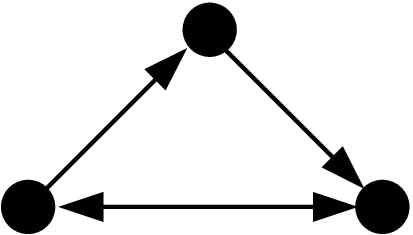}} &\\
 $D_1$   &  &    $D_2$   &  &   $D_3$   &  &   $D_4$   &  \\ 
&&&&&&&\\
{\includegraphics[width=1.5cm]{g5.eps}} &~~~&{\includegraphics[width=1.3cm]{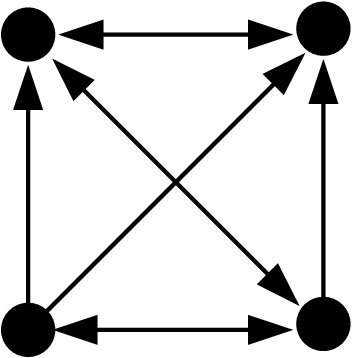}}&&{\includegraphics[width=2.7cm]{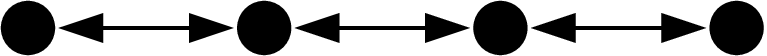}}&&{\includegraphics[width=2.7cm]{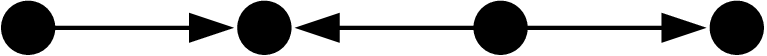}}&\\
 $D_5$   &  &    $D_6$   &  &   $D_7$   &  &   $D_8$   &  \\ 
\end{tabular}
\end{center}
\caption{The eight forbidden induced subdigraphs for directed co-graphs.}
\label{F-co}
\end{figure}

\section{Oriented Co-Graphs}\label{dirco}

Oriented colorings are defined on oriented graphs, which are digraphs with no
bidirected edges. Therefore we introduce oriented co-graphs by omitting 
the series operation from the definition of directed co-graphs, as given
in  \cite{BGR97}.

\begin{definition}[Oriented Co-Graphs]\label{ocog}
The class of { oriented co-graphs} is recursively defined as follows.
\begin{enumerate}
\item Every digraph on a single vertex $(\{v\},\emptyset)$,
denoted by $\bullet$, is an { oriented co-graph}.
\item If
$G_1, \ldots, G_k$ are $k$ vertex-disjoint
oriented co-graphs, then
\begin{enumerate}\label{op-oc}
\item
$G_1\oplus \ldots \oplus G_k$ and
\item
$G_1\oslash \ldots \oslash G_k$  are { oriented co-graphs}.
\end{enumerate}
\end{enumerate}
\end{definition}

The class of oriented co-graphs was already analyzed by Lawler
in \cite{Law76} and \cite{CLS81}~(Section 5) using the notation
of {\em transitive series parallel (TSP) digraphs}.
A digraph $G=(V,A)$ is called {\em transitive}, if for
every pair $(u,v)\in A$ and $(v,w)\in A$ of arcs
with $u\neq w$ the arc $(u,w)$ also belongs to $A$.

\begin{theorem}[\cite{CLS81}]
A graph $G$ is a co-graph if and only if there exists an orientation $G'$ of  $G$,
such that $G'$ is an oriented co-graph.
\end{theorem}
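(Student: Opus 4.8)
The plan is to prove both implications by induction along the recursive tree structures involved (co-tree for co-graphs, di-co-tree for oriented co-graphs), the whole statement being essentially the observation that one can pass between a co-tree and a di-co-tree by translating union nodes to disjoint-union nodes and join nodes to order-composition nodes. The two facts I would record first, both immediate from the definitions, are that for vertex-disjoint digraphs $G_1,\ldots,G_k$ one has
$\un(G_1\oplus\cdots\oplus G_k)=\un(G_1)\cup\cdots\cup\un(G_k)$ and $\un(G_1\oslash\cdots\oslash G_k)=\un(G_1)\times\cdots\times\un(G_k)$; the second holds because $\oslash$ inserts between any two parts exactly the edges of a join, only with a prescribed direction.

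For the direction ``$\Leftarrow$'', assume $G'$ is an oriented co-graph and argue by induction on a di-co-tree of $G'$ that $\un(G')$ is a co-graph. The base case $G'=\bullet$ yields the one-vertex graph. If $G'=G_1'\oplus\cdots\oplus G_k'$ or $G'=G_1'\oslash\cdots\oslash G_k'$ with each $G_i'$ an oriented co-graph, then by the induction hypothesis each $\un(G_i')$ is a co-graph, and by the two displayed identities $\un(G')$ is the disjoint union, respectively the join composition, of the $\un(G_i')$, hence a co-graph by the definition of co-graphs. Since $G=\un(G')$ this direction is done.

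For the direction ``$\Rightarrow$'', assume $G$ is a co-graph and argue by induction on $|V(G)|$ via the standard structure theorem for co-graphs: a co-graph on at least two vertices is either disconnected, and then equals the disjoint union $G_1\cup\cdots\cup G_k$ of its connected components, or connected, in which case its complement is disconnected and $G$ equals a join composition $G_1\times\cdots\times G_k$; in both cases the $G_i$ are co-graphs on fewer vertices. By the induction hypothesis choose orientations $G_i'$ of the $G_i$ that are oriented co-graphs. In the disconnected case $G_1'\oplus\cdots\oplus G_k'$ is an orientation of $G$ and an oriented co-graph; in the connected case $G_1'\oslash\cdots\oslash G_k'$, taken with respect to an arbitrary ordering of the parts, orients every edge of the join and is an oriented co-graph. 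The base case is $\bullet$.

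I do not expect a genuine obstacle here: the argument is a routine structural induction. The only points that need care are recording the identity $\un(G_1\oslash\cdots\oslash G_k)=\un(G_1)\times\cdots\times\un(G_k)$ correctly, and noting that in the ``$\Rightarrow$'' direction it suffices to exhibit one orientation, so the arbitrary choice of child ordering forced by $\oslash$ is harmless — we never have to realize a prescribed orientation of $G$.
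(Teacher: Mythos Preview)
Your argument is correct. Note, however, that the paper does not supply its own proof of this theorem: it is quoted with a citation to \cite{CLS81}, and only the backward implication reappears later (unproved) as Observation~\ref{obs-oco1}. There is therefore nothing in the paper to compare against; your structural induction is the standard one and goes through without issue.
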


A di-co-tree $T$ is {\em canonical} if
on every path from the root to the leaves of $T$, the labels
disjoint union and order operation strictly alternate.
Since the disjoint union $\oplus$ and the order composition $\oslash$ are associative,
we always can assume canonical di-co-trees.

\begin{lemma}\label{lem-level}
Let $G$ be an oriented co-graph and $T$ be
a di-co-tree for $G$. Then, $T$ can be transformed in linear time
into a canonical di-co-tree for $G$.
\end{lemma}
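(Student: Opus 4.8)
The plan is to normalize an arbitrary di-co-tree by a bottom-up (or repeated local) rewriting that eliminates two kinds of defects: first, inner nodes whose label agrees with that of their parent, and second, unary inner nodes. Recall that the allowed inner labels are $\oplus$ and $\oslash$, both associative, so these rewrites preserve the represented digraph.

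First I would deal with \emph{associativity collapsing}: whenever an inner node $x$ carries the same label as its parent $y$, contract the edge $xy$, making the children of $x$ into children of $y$ (in the induced order, if the label is $\oslash$). Since $\oplus$ and $\oslash$ are associative, the digraph defined by the tree is unchanged. Each such contraction decreases the number of inner nodes by one, so after at most $|V(T)|$ contractions no inner node shares its label with its parent. Second I would deal with \emph{unary nodes}: an inner node with exactly one child defines the same digraph as that child, so splice it out. This may reintroduce a label clash with the (new) parent, so I would interleave the two reductions, or simply argue by a monovariant: both operations strictly decrease the number of inner nodes of $T$, hence the process terminates after $\bigo(|V(T)|)$ steps. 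When it halts, every inner node has at least two children and no inner node has the same label as its parent; since the only two labels are $\oplus$ and $\oslash$, consecutive inner nodes on any root-to-leaf path carry strictly alternating labels, which is exactly the canonical property.

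For the linear-time claim I would implement this as a single bottom-up pass over $T$: process the children of a node before the node itself, and when visiting an inner node $x$, absorb any child that is an inner node with the same label (its child list has already been finalized), and drop $x$ entirely if it ends up with a single child by attaching that child directly to $x$'s parent. A standard amortization argument shows each original node is touched a constant number of times, giving $\bigo(|V(T)|)$ total time; since a di-co-tree for an $n$-vertex graph has $\bigo(n)$ nodes, this is linear in the size of $G$. (By Lemma~\ref{colo-und}'s companion remarks and \cite{CP06}, the initial tree $T$ is itself obtainable in linear time, but here $T$ is given.)

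The main obstacle — really the only subtle point — is making the interaction of the two reductions rigorous: collapsing a same-label parent can merge child lists, and splicing out a unary node can create a new same-label adjacency, so one must be careful that the rewriting does not loop. The clean way is the monovariant argument above (each step strictly reduces the inner-node count), together with the observation that a bottom-up order never needs to revisit an already-finalized subtree, because contractions and splices at $x$ only affect $x$, its parent, and $x$'s already-processed children. The correctness of each individual rewrite is immediate from associativity of $\oplus$ and $\oslash$ and from the definition of the order composition, so no further case analysis on the eight forbidden subdigraphs or on the structure of $G$ is needed.
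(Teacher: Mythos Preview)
Your argument is correct. The paper does not actually supply a proof of this lemma: it merely precedes the statement with the one-line remark that $\oplus$ and $\oslash$ are associative, and leaves the construction and the linear-time claim implicit. Your write-up is thus a faithful elaboration of the paper's intended (but unwritten) argument: associativity justifies contracting a same-label parent--child pair, and a bottom-up sweep with the inner-node-count monovariant gives termination and the $\bigo(|V(T)|)$ bound.

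Two minor remarks. First, the paper's definition of \emph{canonical} only requires strict alternation of the labels $\oplus$ and $\oslash$ along root-to-leaf paths; it does not demand that inner nodes have at least two children. Your splicing-out of unary nodes is therefore not strictly needed for the stated conclusion (though it is harmless and standard). Second, your closing claim that $|V(T)|\in\bigo(n)$ relies on every inner node of the \emph{given} tree $T$ having at least two children. This is the natural reading of the operations in Definition~\ref{ocog} (one does not speak of ``$k$ vertex-disjoint'' graphs for $k=1$), so the bound is justified; but if one wanted to be fully defensive, it would suffice to say the transformation is linear in $|V(T)|$, which is all the lemma asserts.
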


The recursive definitions of  oriented and undirected co-graphs lead to the following
observation.

\begin{observation}\label{obs-oco1}
For every oriented co-graph $G$ the
underlying undirected graph $\un(G)$ is a co-graph.
\end{observation}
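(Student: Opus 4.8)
The plan is to prove this by structural induction on the recursive definition of oriented co-graphs given in Definition~\ref{ocog}, mirroring the parallel recursive definition of undirected co-graphs via disjoint union $\cup$ and join composition $\times$.

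For the base case, I would take $G = \bullet = (\{v\},\emptyset)$; then $\un(G)$ is the single-vertex graph, which is trivially a co-graph. For the inductive step, let $G_1,\ldots,G_k$ be vertex-disjoint oriented co-graphs and assume as induction hypothesis that each $\un(G_i)$ is a co-graph. I would then distinguish the two composition operations. If $G = G_1 \oplus \ldots \oplus G_k$, then since the disjoint union operation adds no arcs between the parts, we get $\un(G) = \un(G_1) \cup \ldots \cup \un(G_k)$, which is a co-graph because co-graphs are closed under disjoint union. If $G = G_1 \oslash \ldots \oslash G_k$, the key observation is that the order composition inserts all arcs from $V_i$ to $V_j$ for every pair $i < j$; ignoring orientations, every unordered pair of distinct indices is covered (by $i<j$ or $j<i$), so every possible edge between $V_i$ and $V_j$ for $i \neq j$ appears in $\un(G)$. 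Hence $\un(G) = \un(G_1) \times \ldots \times \un(G_k)$, which is a co-graph because co-graphs are closed under join composition. In both cases $\un(G)$ is a co-graph, completing the induction.

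The only point that requires a moment of care — and the closest thing to an "obstacle" — is the order-composition case: one must note that although $\oslash$ is an asymmetric operation on arcs, passing to the underlying undirected graph symmetrizes it, turning it precisely into the join $\times$ of the underlying graphs. Everything else is routine, and one could alternatively phrase the whole argument as: applying $\un$ to a di-co-tree of $G$ and relabelling $\oplus \mapsto \cup$, $\oslash \mapsto \times$ yields a valid co-tree for $\un(G)$.
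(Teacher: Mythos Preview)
Your proof is correct and follows exactly the approach the paper intends: the observation is stated without a formal proof, justified only by the remark that it follows from the parallel recursive definitions of oriented and undirected co-graphs, which is precisely the structural induction you carry out. Your final sentence about relabelling the di-co-tree is in fact a concise restatement of the paper's one-line justification.
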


The reverse direction of this observation only holds under certain conditions,
see Theorem \ref{th-ch-oco}. By~$\overleftrightarrow{P_2}=(\{v_1,v_2\},\{(v_1,v_2),(v_{2},v_1)\})$
we denote the complete biorientation of a path on two vertices.

\begin{lemma}\label{le-co-t}
Let $G$ be a digraph, such that $G\in\free(\{\overleftrightarrow{P_2},D_1,D_5\})$. Then,
it holds that $G$ is transitive.
\end{lemma}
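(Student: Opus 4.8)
The plan is to argue by contradiction directly from the three excluded induced subdigraphs. Suppose $G=(V,A)$ is not transitive. Then there are (necessarily distinct, since $G$ has no loops) vertices $u,v,w$ with $(u,v)\in A$, $(v,w)\in A$, but $(u,w)\notin A$. I would first exploit $\overleftrightarrow{P_2}$-freeness to eliminate the two "reverse" arcs along the path: from $(u,v)\in A$ and the absence of a digon on $\{u,v\}$ we get $(v,u)\notin A$, and likewise $(v,w)\in A$ forces $(w,v)\notin A$. Thus, among the six possible arcs on the vertex set $\{u,v,w\}$, we already know that $(u,v)$ and $(v,w)$ are present, that $(u,w)$, $(v,u)$, $(w,v)$ are absent, and only the status of $(w,u)$ is open.

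The argument then splits into two cases according to this last arc. If $(w,u)\notin A$, the induced subdigraph $G[\{u,v,w\}]$ has arc set exactly $\{(u,v),(v,w)\}$, hence is isomorphic to $\overrightarrow{P_3}$, which is the forbidden subdigraph $D_1$ of Figure \ref{F-co}; this contradicts $G\in\free(\{\overleftrightarrow{P_2},D_1,D_5\})$. If instead $(w,u)\in A$, then $G[\{u,v,w\}]$ has arc set $\{(u,v),(v,w),(w,u)\}$ and is isomorphic to $\overrightarrow{C_3}$, which is the forbidden subdigraph $D_5$ (cf.\ also Figure \ref{ex-spe}); again a contradiction. Since both cases are impossible, no such triple $u,v,w$ exists, so $G$ is transitive.

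The proof is short and there is no real obstacle; the only point that needs care is the bookkeeping in the first step, namely verifying that the digon-freeness hypothesis is exactly what pins the induced subdigraph on the three vertices down to be precisely $\overrightarrow{P_3}$ or $\overrightarrow{C_3}$ (and not one of these together with an extra arc, which would not immediately match a single forbidden pattern). Matching these two oriented graphs with $D_1$ and $D_5$ in the list of eight forbidden subdigraphs is immediate from Figure \ref{F-co}.
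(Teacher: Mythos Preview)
Your argument is correct and follows essentially the same route as the paper's proof: use $\overleftrightarrow{P_2}$-freeness to rule out the reverse arcs $(v,u)$ and $(w,v)$, and then observe that the remaining possibilities for the induced subdigraph on $\{u,v,w\}$ when $(u,w)\notin A$ are precisely $D_1$ (if $(w,u)\notin A$) and $D_5$ (if $(w,u)\in A$). The only cosmetic difference is that you phrase it as a proof by contradiction whereas the paper argues directly that $(u,w)$ must be present; also, a small quibble: the distinctness $u\neq w$ comes from the definition of transitivity rather than from the absence of loops.
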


\begin{proof}
Let  $(u,v),(v,w)\in A$ be two arcs of  $G=(V,A)$.
Since $G\in\free(\{\overleftrightarrow{P_2}\})$, we know that $(v,u),(w,v)\not\in A$.
Further, since  $G\in\free(\{D_1,D_5\})$, we know that $u$ and $w$ are
connected either only by
$(u,w)\in A$ or by $(u,w)\in A$ and $(w,u)\in A$, which
implies that $G$ is transitive.
\end{proof}

Oriented co-graphs can be characterized by forbidden subdigraphs as follows.

\begin{theorem}\label{th-ch-oco}
Let $G$ be a digraph. The following properties are equivalent:
\begin{enumerate}
	\item\label{ch-oc1} $G$ is an oriented co-graph.
	\item\label{ch-oc2} $G\in \free(\{D_1, D_5, D_8, \overleftrightarrow{P_2}\})$.
        \item\label{ch-oc3a} $G\in \free(\{D_1, D_5, \overleftrightarrow{P_2}\})$
		   and $\un(G)\in\free(\{P_4\})$.
        \item\label{ch-oc3} $G\in \free(\{D_1, D_5, \overleftrightarrow{P_2}\})$
		   and $\un(G)$ is a co-graph.

	\item\label{ch-oc4}  $G$ has directed NLC-width $1$ and $G\in \free(\{\overleftrightarrow{P_2}\})$.
	\item\label{ch-oc5}  $G$ has directed clique-width at most $2$ and $G\in \free(\{\overleftrightarrow{P_2}\})$.
        \item\label{ch-oc6} $G$ is  transitive and $G\in \free(\{\overleftrightarrow{P_2}, D_8\})$.
\end{enumerate}
\end{theorem}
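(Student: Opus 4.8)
The plan is to take statement~\ref{ch-oc2}, i.e.\ $G\in\free(\{D_1,D_5,D_8,\overleftrightarrow{P_2}\})$, as a hub: first prove $\ref{ch-oc1}\Leftrightarrow\ref{ch-oc2}$ and then derive the remaining characterizations from it by short arguments. The cornerstone is that the oriented co-graphs are exactly the directed co-graphs without a bidirected edge. One direction is immediate: the order composition of oriented graphs is again oriented and so is the disjoint union, so every oriented co-graph is an oriented graph, i.e.\ lies in $\free(\{\overleftrightarrow{P_2}\})$, and its oriented co-tree is a di-co-tree, so it is a directed co-graph. For the converse, take a di-co-tree of a directed co-graph $G\in\free(\{\overleftrightarrow{P_2}\})$; a series node would join at least two non-empty subdigraphs by all arcs in both directions and hence create an induced $\overleftrightarrow{P_2}$, which is impossible, so the di-co-tree uses only $\oplus$ and $\oslash$ and $G$ is an oriented co-graph.

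Granting this, $\ref{ch-oc1}\Leftrightarrow\ref{ch-oc2}$ follows from the characterization of directed co-graphs as $\free(\{D_1,\ldots,D_8\})$ of \cite{CP06}: an oriented co-graph is a directed co-graph and an oriented graph, hence lies in $\free(\{D_1,\ldots,D_8,\overleftrightarrow{P_2}\})\subseteq\free(\{D_1,D_5,D_8,\overleftrightarrow{P_2}\})$; conversely, a digraph in $\free(\{D_1,D_5,D_8,\overleftrightarrow{P_2}\})$ already lies in $\free(\{D_1,\ldots,D_8\})$, because each of $D_2,D_3,D_4,D_6,D_7$ contains an induced $\overleftrightarrow{P_2}$ (inspection of Figure~\ref{F-co}), so it is a directed co-graph and, being $\overleftrightarrow{P_2}$-free, an oriented co-graph. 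The same inspection pins down $D_1=\overrightarrow{P_3}$, $D_5=\overrightarrow{C_3}$, and that $D_8$ is the orientation of $P_4$ containing no induced $\overrightarrow{P_3}$ — these are precisely the members of $\{D_1,\ldots,D_8\}$ that are themselves oriented graphs, which is why exactly these three survive.

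For $\ref{ch-oc2}\Leftrightarrow\ref{ch-oc3a}\Leftrightarrow\ref{ch-oc3}$ I would use the elementary fact that every orientation of $P_4$ either contains an induced $\overrightarrow{P_3}=D_1$ or is isomorphic to $D_8$, together with $\un(D_8)=P_4$. Hence, within $\free(\{D_1,D_5,\overleftrightarrow{P_2}\})$, forbidding $D_8$ is equivalent to $\un(G)$ being $P_4$-free, and $\un(G)$ is $P_4$-free iff $\un(G)$ is a co-graph, which gives $\ref{ch-oc3a}\Leftrightarrow\ref{ch-oc3}$. For $\ref{ch-oc2}\Leftrightarrow\ref{ch-oc6}$, one direction is Lemma~\ref{le-co-t} (freeness of $\overleftrightarrow{P_2},D_1,D_5$ forces transitivity); for the other, a transitive digraph can contain neither an induced $\overrightarrow{P_3}=D_1$ nor an induced $\overrightarrow{C_3}=D_5$, since transitivity would force a further arc, so transitivity together with $\free(\{\overleftrightarrow{P_2},D_8\})$ already yields $\free(\{D_1,D_5,D_8,\overleftrightarrow{P_2}\})$. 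Finally, $\ref{ch-oc2}\Leftrightarrow\ref{ch-oc4}$ and $\ref{ch-oc2}\Leftrightarrow\ref{ch-oc5}$ come from the known characterizations of directed co-graphs by directed NLC-width~$1$ and by directed clique-width at most~$2$: combined with $\ref{ch-oc1}\Leftrightarrow\ref{ch-oc2}$ and the cornerstone, a digraph satisfies \ref{ch-oc4} (resp.\ \ref{ch-oc5}) iff it is a directed co-graph with no bidirected edge, i.e.\ an oriented co-graph.

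The step I expect to be most delicate is the bookkeeping around Figure~\ref{F-co}: one must verify which of $D_1,\ldots,D_8$ are oriented and that the other five each contain an induced $\overleftrightarrow{P_2}$. Should one wish to avoid invoking \cite{CP06} for the direction $\ref{ch-oc2}\Rightarrow\ref{ch-oc1}$, the effort moves to a modular-decomposition argument: starting from a co-tree of $\un(G)$ (available since $\un(G)$ is a co-graph) one lifts it to an oriented co-tree, using $\free(\{D_1,D_5,\overleftrightarrow{P_2}\})$ to argue that at every join node the arcs between the two sides are uniform in direction and acyclic across parts, hence realise an $\oslash$ — routine but somewhat tedious.
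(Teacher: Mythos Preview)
Your proposal is correct and follows essentially the same route as the paper: take \ref{ch-oc2} as the hub, get $\ref{ch-oc1}\Leftrightarrow\ref{ch-oc2}$ from the forbidden-subdigraph characterization of directed co-graphs in \cite{CP06} together with the observation that an oriented co-graph is precisely a $\overleftrightarrow{P_2}$-free directed co-graph, derive $\ref{ch-oc2}\Leftrightarrow\ref{ch-oc6}$ via Lemma~\ref{le-co-t} and the trivial fact that transitivity excludes $D_1,D_5$, and quote \cite{GWY16} for $\ref{ch-oc4},\ref{ch-oc5}$. The only noteworthy difference is in $\ref{ch-oc2}\Rightarrow\ref{ch-oc3a}$: you argue directly that any orientation of $P_4$ contains an induced $D_1$ or is $D_8$, whereas the paper instead routes through $\ref{ch-oc1}$ and Observation~\ref{obs-oco1} (the underlying graph of an oriented co-graph is a co-graph); both are short and valid.
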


\begin{proof}
$(\ref{ch-oc1})\Rightarrow(\ref{ch-oc2})$ If $G$ is an oriented co-graph, then
$G$ is a directed co-graph and by \cite{CP06} it holds that $G\in \free(\{D_1,\ldots, D_8\})$. Furthermore, $G\in \free(\{\overleftrightarrow{P_2}\})$
because of the missing series composition. This leads to $G\in \free(\{D_1, D_5, D_8, \overleftrightarrow{P_2}\})$.
$(\ref{ch-oc2})\Rightarrow(\ref{ch-oc1})$
If $G\in \free(\{D_1, D_5, D_8, \overleftrightarrow{P_2}\})$, then $G\in \free(\{D_1,\ldots, D_8\})$ and is $G$ a directed co-graph.
Since $G\in \free(\{\overleftrightarrow{P_2}\})$,
there is no series operation in any
construction of $G$ which implies that $G$ is an oriented co-graph.
$(\ref{ch-oc3a})\Leftrightarrow(\ref{ch-oc3})$ Since co-graphs are precisely the $P_4$-free graphs \cite{CLS81}.
$(\ref{ch-oc2})\Rightarrow(\ref{ch-oc6})$ By Lemma \ref{le-co-t}.
$(\ref{ch-oc6})\Rightarrow(\ref{ch-oc2})$ If $G$ is transitive, then
$G\in\free(\{D_1, D_5\})$.
$(\ref{ch-oc1})\Leftrightarrow(\ref{ch-oc4})$ and  $(\ref{ch-oc1})\Leftrightarrow(\ref{ch-oc5})$ By \cite{GWY16}.
$(\ref{ch-oc1})  \&(\ref{ch-oc2})  \Rightarrow(\ref{ch-oc3})$ By Observation \ref{obs-oco1}. $(\ref{ch-oc3a})\Rightarrow(\ref{ch-oc2})$ If $\un(G)$
does not contain a $P_4$, then $G$ can not contain any orientation of $P_4$.
\end{proof}

Among others are two subclasses of oriented co-graphs, which will be of interest within
our results.  By restricting within Definition \ref{ocog} (\ref{op-oc}) to $k=2$ and graph $G_1$ or $G_2$
to an edgeless graph or to a single vertex, we obtain the class of all
{\em oriented simple co-graphs} or {\em oriented threshold graphs}, respectively.
The class of oriented threshold graphs has been introduced by Boeckner in \cite{Boe18}.

\section{Graph Coloring on Recursively Defined Digraphs}

\subsection{Oriented Graph Coloring Problem}

Oriented graph coloring has been introduced by Courcelle \cite{Cou94} in 1994. Most results
on this problem consider orientations of undirected graphs. Now, we consider oriented graph
coloring on recursively defined oriented graph classes.

\begin{definition}[Oriented Graph Coloring \cite{Cou94}]\label{def-oc}
An {oriented $k$-coloring} of an oriented graph $G=(V,A)$ is a mapping $c:V\to \{1,\ldots,k\}$,
such that:
\begin{itemize}
	\item $c(u)\neq c(v)$ for every $(u,v)\in A$
	\item $c(u)\neq c(y)$ for every two arcs $(u,v)\in A$ and $(x,y)\in A$ with $c(v)=c(x)$
\end{itemize}
The { oriented chromatic number} of $G$, denoted by $\chi_o(G)$, is the smallest $k$,
such that $G$ has an oriented $k$-coloring.
The vertex sets $V_i=\{v\in V\mid c(v)=i\}$, $1\leq i\leq k$, divide a partition of
$V$ into so called { color~classes}.
\end{definition}

For two oriented graphs $G_1=(V_1,A_1)$ and $G_2=(V_2,A_2)$ a {\em homomorphism} from $G_1$ to $G_2$, $G_1 \to G_2$ for short,
is a mapping $h: V_1 \to V_2$, such that  $(u,v) \in A_1$ implies  that
$(h(u),h(v)) \in A_2$.  The oriented graphs $G_1$ and $G_2$ are {\em homomorphically equivalent},
if there is a homomorphism from $G_1$ to $G_2$ and  one from $G_2$ to $G_1$.
A homomorphism from $G_1$ to $G_2$ can be regarded as an oriented coloring of $G_1$
that uses the vertices of $G_2$ as colors classes. This leads to equivalent definitions for
oriented coloring and oriented chromatic number.
There is an oriented $k$-coloring of an oriented graph $G_1$
if and only if there is a homomorphism from $G_1$ to some oriented graph $G_2$ on $k$ vertices.
That is, the oriented chromatic number of $G$ is the minimum number of vertices in an
oriented graph $G_2$, such that there is a homomorphism from $G_1$ to $G_2$.
Obviously, $G_2$ can be chosen as a tournament.

\begin{observation}\label{oc-tournament}
There is an oriented $k$-coloring of an oriented graph $G_1$
if and only if there is a homomorphism from $G_1$ to some tournament $G_2$ on $k$ vertices.
Further, the oriented chromatic number of $G$ is the minimum number of vertices in a
tournament $G_2$, such that there is a homomorphism from $G_1$ to $G_2$.
\end{observation}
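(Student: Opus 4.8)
The plan is to prove Observation~\ref{oc-tournament} by reducing it to the already-established equivalence between oriented $k$-colorings and homomorphisms to arbitrary oriented graphs on $k$ vertices, and then showing that the target graph can always be taken to be a tournament without increasing the number of vertices. The ``only if'' direction is the substantive one; the ``if'' direction is immediate, since a tournament is in particular an oriented graph, so a homomorphism to a tournament on $k$ vertices is a homomorphism to an oriented graph on $k$ vertices, which by the equivalence already discussed in the text yields an oriented $k$-coloring.

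For the ``only if'' direction, I would start from an oriented $k$-coloring $c\colon V\to\{1,\dots,k\}$ of $G_1=(V_1,A_1)$. First I would observe that the two conditions in Definition~\ref{def-oc} guarantee that the color classes behave consistently: if there is any arc from color class $V_i$ to color class $V_j$, then there is no arc from $V_j$ to $V_i$. Indeed, the first condition rules out $i=j$, and the second condition, applied to an arc $(u,v)$ with $c(u)=i$, $c(v)=j$ and a hypothetical arc $(x,y)$ with $c(x)=j$, $c(y)=i$, would force $c(u)\neq c(y)$, contradicting $c(u)=i=c(y)$. Hence the quotient ``color graph'' $H$ on vertex set $\{1,\dots,k\}$, with an arc $(i,j)$ whenever some arc of $G_1$ goes from $V_i$ to $V_j$, is a well-defined oriented graph (no loops by the first condition, no digons by the argument just given), and $c$ itself is a homomorphism $G_1\to H$.

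Next I would extend $H$ to a tournament $H'$ on the same vertex set $\{1,\dots,k\}$ by adding, for every pair $\{i,j\}$ not already joined by an arc of $H$, an arbitrary arc in one of the two directions (say, orient it from $i$ to $j$ whenever $i<j$). Then $H'$ is a tournament on $k$ vertices, $H$ is a subdigraph of $H'$ on the same vertices, and therefore $c\colon G_1\to H'$ is still a homomorphism. This proves the existence of a homomorphism from $G_1$ to a tournament on $k$ vertices. The final sentence of the statement, about the oriented chromatic number being the minimum number of vertices of such a tournament, then follows by combining the two directions: any oriented $k$-coloring gives a homomorphism to a $k$-vertex tournament, and conversely; so the minimum $k$ achievable as a coloring equals the minimum order of a tournament admitting a homomorphism from $G_1$, which is exactly $\chi_o(G)$.

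The only point requiring care — and the closest thing to an obstacle — is verifying that the quotient color graph $H$ contains no pair of opposite arcs, i.e., that the coloring really does induce a consistent orientation between color classes; this is precisely where the second condition of Definition~\ref{def-oc} is used, and it must be invoked in both possible directions of the potential digon. Everything else is routine: the rest of the argument is just the observation that enlarging the target digraph to a tournament cannot destroy a homomorphism, together with the equivalence between colorings and homomorphisms already recorded in the paragraph preceding the statement.
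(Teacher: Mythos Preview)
Your proposal is correct and is precisely the natural elaboration of what the paper does: the paper does not give a formal proof of this observation at all, but merely remarks in the preceding paragraph that ``Obviously, $G_2$ can be chosen as a tournament,'' relying on the already-stated equivalence between oriented $k$-colorings and homomorphisms to oriented graphs on $k$ vertices. Your argument --- build the quotient color digraph, check it is oriented via the two conditions of Definition~\ref{def-oc}, then complete it to a tournament --- is exactly the standard justification of that one-word ``obviously,'' so there is nothing to compare.
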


\begin{lemma}\label{le-isubdigraph}
Let $G$ be an oriented graph and $H$ be a subdigraph
of $G$, then $\chi_o(H)\leq \chi_o(G)$.
\end{lemma}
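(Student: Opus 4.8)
The plan is to take an optimal oriented coloring of $G$ and simply restrict it to the vertices of $H$, then verify that the two defining conditions of an oriented coloring (Definition~\ref{def-oc}) are inherited. Concretely, first I would let $k=\chi_o(G)$ and fix an oriented $k$-coloring $c:V(G)\to\{1,\dots,k\}$ of $G$. Since $H$ is a subdigraph of $G$, we have $V(H)\subseteq V(G)$ and $A(H)\subseteq A(G)$; define $c'$ as the restriction of $c$ to $V(H)$.

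Next I would check the first condition: for every arc $(u,v)\in A(H)$ we have $(u,v)\in A(G)$, so $c(u)\neq c(v)$, hence $c'(u)\neq c'(v)$. Then I would check the second condition: suppose $(u,v),(x,y)\in A(H)$ with $c'(v)=c'(x)$. Since both arcs lie in $A(G)$ and $c(v)=c(x)$, the fact that $c$ is an oriented coloring of $G$ gives $c(u)\neq c(y)$, i.e.\ $c'(u)\neq c'(y)$. Thus $c'$ is an oriented $k$-coloring of $H$, which yields $\chi_o(H)\le k=\chi_o(G)$.

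Alternatively, and perhaps more cleanly, one can argue through the homomorphism characterization of Observation~\ref{oc-tournament}: by that observation there is a homomorphism $h:G\to T$ to a tournament $T$ on $k=\chi_o(G)$ vertices. The identity map on $V(H)$ is a homomorphism from $H$ into $G$ (as $A(H)\subseteq A(G)$), and composing it with $h$ gives a homomorphism $H\to T$, so again $\chi_o(H)\le k$. I do not expect any real obstacle here; the only point requiring a moment's care is that "subdigraph'' allows deleting arcs as well as vertices, but deleting arcs can only make both conditions easier to satisfy, so the restricted coloring remains valid.
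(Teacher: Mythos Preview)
Your proof is correct. Note, however, that the paper states Lemma~\ref{le-isubdigraph} without proof, presumably because it is a routine observation; both of your arguments---restricting an optimal oriented coloring and verifying the two conditions of Definition~\ref{def-oc}, or composing the inclusion homomorphism $H\hookrightarrow G$ with a homomorphism $G\to T$ to a tournament as in Observation~\ref{oc-tournament}---are standard and entirely adequate ways to fill in this gap.
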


\begin{example}\label{ex-color}For oriented paths and oriented cycles we know:
$\chi_o(\overrightarrow{P_2})=2$,  $\chi_o(\overrightarrow{P_3})=3$,
$\chi_o(\overrightarrow{C_4})=4$, $\chi_o(\overrightarrow{C_5})=5$.
\end{example}

An oriented graph $G=(V,A)$ is an {\em oriented clique} ({\em o-clique}) if $\chi_o(G) = |V|$.
Thus all graphs given in Example \ref{ex-color} are oriented cliques.

\begin{itemize}
\item[\textbf{Name}] Oriented Chromatic Number ($\OCN$)

\item[\textbf{Instance}]  An oriented graph $G=(V,A)$ and a positive integer $c \leq |V|$.

\item[\textbf{Question}]  Is there an oriented $c$-coloring for $G$?
\end{itemize}

If $c$ is constant and not part of the input, the corresponding problem
is denoted by $\OCN_{c}$. Even~for DAGs $\OCN_{4}$ is NP-complete  \cite{CD06}.

The definition of oriented coloring is also used for undirected graphs. For an
undirected graph $G$ the 
maximum value $\chi_o(G')$ of all possible orientations $G'$ of
$G$ is considered. In this sense, every tree  has oriented chromatic number at most $3$.
For several further graph classes there exist bounds on the oriented number.
Among these are outerplanar graphs \cite{Sop97}, planar graphs \cite{Mar13}, and
Halin graphs \cite{DS14}.

\subsection{Oriented Graph Coloring for Oriented Graphs}

Oriented graph coloring has not yet been considered for recursively defined graphs,
though it has been analyzed for some graph operations. In this section we show 
results of oriented graph coloring on some graph operations and provide algorithms
for recursively defined oriented graph classes. This will also be useful for the
following section.

First, we give some results on the oriented graph coloring for general
recursively defined oriented graphs. These results will be very useful
to prove our results for oriented co-graphs in the next section.

\begin{lemma}\label{le-dec}Let $G_1,\ldots,G_k$ be $k$ vertex-disjoint oriented graphs.
Then the following equations holds:
%
\begin{enumerate}
\item $\chi_o(G_1\oplus \bullet)=\chi_o(G_1)$

\item\label{le-dec-b} $\chi_o(G_1 \oplus \ldots\oplus G_k) \geq \max(\chi_o (G_1), \ldots, \chi_o(G_k))$

\item $\chi_o(G_1\oslash \ldots \oslash G_k)= \chi_o(G_1) + \ldots + \chi_o(G_k)$
\end{enumerate}
\end{lemma}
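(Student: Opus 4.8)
I would prove the three claims separately, using Observation~\ref{oc-tournament} (colorings as homomorphisms to tournaments) and Lemma~\ref{le-isubdigraph} (monotonicity under subdigraphs) as the main tools.

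\emph{Part 1 ($\chi_o(G_1\oplus\bullet)=\chi_o(G_1)$).} Since $G_1$ is a subdigraph of $G_1\oplus\bullet$, Lemma~\ref{le-isubdigraph} gives $\chi_o(G_1)\le\chi_o(G_1\oplus\bullet)$. For the reverse, take an optimal oriented coloring $c$ of $G_1$ with $\chi_o(G_1)\ge 1$ colors; we may assume $G_1$ is nonempty, so at least one color, say color $1$, is used. Extend $c$ to the isolated vertex $v$ by setting $c(v)=1$. The isolated vertex has no incident arcs, so neither of the two conditions in Definition~\ref{def-oc} is newly violated, and we obtain an oriented $\chi_o(G_1)$-coloring of $G_1\oplus\bullet$.

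\emph{Part 2 (lower bound for disjoint union).} This is immediate: each $G_i$ is a subdigraph of $G_1\oplus\dots\oplus G_k$, so Lemma~\ref{le-isubdigraph} yields $\chi_o(G_i)\le\chi_o(G_1\oplus\dots\oplus G_k)$ for every $i$, hence the maximum is a lower bound.

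\emph{Part 3 (order composition).} I would use the homomorphism characterization. For the upper bound, fix optimal oriented colorings $c_i\colon V_i\to\{1,\dots,\chi_o(G_i)\}$ of each $G_i$, witnessed by homomorphisms $h_i\colon G_i\to T_i$ into tournaments $T_i$ on $\chi_o(G_i)$ vertices. Build a tournament $T=T_1\oslash\dots\oslash T_k$ on $\sum_i\chi_o(G_i)$ vertices (the order composition of tournaments is again a tournament since every pair is either internal to some $T_j$ or joined by a single arc from the lower to the higher index). The maps $h_i$ assemble into a homomorphism $h\colon G_1\oslash\dots\oslash G_k\to T$: arcs inside a block $G_i$ are handled by $h_i$, and an arc from $G_i$ to $G_j$ with $i<j$ maps to an arc from $T_i$ to $T_j$ in $T$, which exists by construction. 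By Observation~\ref{oc-tournament} this gives $\chi_o(G_1\oslash\dots\oslash G_k)\le\sum_i\chi_o(G_i)$. For the lower bound, let $c$ be an optimal oriented coloring of $G:=G_1\oslash\dots\oslash G_k$. The crucial claim is that the color classes used on distinct blocks $V_i$ and $V_j$ are disjoint: if some color $a$ appeared on $u\in V_i$ and on $w\in V_j$ with $i<j$, pick any $u'\in V_i$, $w'\in V_j$; then $(u,w')\in A$ and $(u',w)\in A$ are two arcs with $c(w')=c(u')$ forced to differ from... more carefully, I would argue via the second coloring condition applied to a well-chosen pair of arcs that a color shared between two different blocks forces a monochromatic arc or a forbidden configuration. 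Granting the claim, the restriction of $c$ to each $V_i$ is an oriented coloring of $G_i$ using at least $\chi_o(G_i)$ colors, and the sets of colors are pairwise disjoint, so $c$ uses at least $\sum_i\chi_o(G_i)$ colors, giving the matching lower bound.

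\textbf{Main obstacle.} The genuinely nontrivial step is the disjointness claim in Part~3's lower bound: showing that an oriented coloring of $G_1\oslash\dots\oslash G_k$ cannot reuse a color across two different blocks. The subtlety is that a color class is required to be an independent set, and within $G_i\oslash G_j$ there is a complete set of arcs from $V_i$ to $V_j$, so a color class meeting both $V_i$ and $V_j$ is not independent --- actually this already settles it when $G$ restricted to those blocks has such an arc between the two vertices of the same color, but if the two vertices lie in non-adjacent blocks the argument must instead invoke the second (transitivity-type) condition of Definition~\ref{def-oc}. I would handle this by reducing to the $k=2$ case (associativity of $\oslash$) and then checking directly that in $G_1\oslash G_2$ every vertex of $V_1$ is an in-neighbour of every vertex of $V_2$, so two vertices of the same color in $V_1$ and $V_2$ respectively would form a monochromatic arc --- contradiction. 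Thus the obstacle dissolves once the problem is reduced to two blocks.
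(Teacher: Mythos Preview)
Your arguments for Parts~1 and~2 match the paper's proof essentially verbatim: the subdigraph monotonicity of Lemma~\ref{le-isubdigraph} gives the lower bounds, and extending the coloring to the isolated vertex gives the upper bound in Part~1.

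For the upper bound in Part~3 you take a slightly different route. The paper defines an explicit coloring by shifting the color sets of the blocks to make them disjoint and then directly verifies both conditions of Definition~\ref{def-oc}. You instead pass through Observation~\ref{oc-tournament}, assembling the target tournaments $T_i$ into $T_1\oslash\dots\oslash T_k$. Both arguments are short and correct; yours is perhaps conceptually cleaner, the paper's is more self-contained.

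Your ``main obstacle'' in the lower bound of Part~3, however, is not an obstacle at all, and your discussion of it betrays a momentary misreading of the order composition. By definition, $G_1\oslash\dots\oslash G_k$ contains \emph{every} arc from $V_i$ to $V_j$ whenever $i<j$, not just for consecutive indices. Hence there are no ``non-adjacent blocks'': any $u\in V_i$ and $w\in V_j$ with $i\neq j$ are joined by an arc, so $c(u)\neq c(w)$ by the first condition of Definition~\ref{def-oc} alone. This is exactly the one-line argument the paper gives (``the order operation implies that for every $i\neq j$ no vertex in $G_i$ can be colored in the same way as a vertex in $G_j$''). Your detour through the second coloring condition and the reduction to $k=2$ via associativity is unnecessary; once you recall the definition correctly, the disjointness of color sets across blocks is immediate, and the lower bound follows at once since $c$ restricted to each $V_i$ is an oriented coloring of $G_i$.
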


\begin{proof}~~
\begin{enumerate}
\item
$\chi_o(G_1\oplus \bullet)\leq \chi_o(G_1)$

Since no new arcs are inserted $G_1$ can keep its colors. The added
isolated vertex gets a color of $G_1$ in order to obtain a valid coloring
for $G_1\oplus \bullet$.

$\chi_o(G_1\oplus \bullet)\geq \chi_o(G_1)$

This relation
holds by Lemma \ref{le-isubdigraph},
since $G_1$ is an induced subdigraph of $G_1 \oplus \bullet$.

\item $\chi_o(G_1 \oplus \ldots\oplus G_k)\geq  \max(\chi_o (G_1), \ldots, \chi_o(G_k))$

Since the digraphs $G_1,\ldots, G_k$ are induced subdigraphs of digraph $G_1 \oplus \ldots\oplus G_k$,
all values $\chi_o (G_1),\ldots, \chi_o(G_k)$ lead to
a lower bound for the number of necessary colors
of the combined graph by Lemma \ref{le-isubdigraph}.

\item
$\chi_o(G_1\oslash  \ldots \oslash  G_k)\leq \chi_o(G_1) + \ldots + \chi_o(G_k)$

For $1\leq i \leq k$ let $G_i=(V_i,A_i)$ and $c_i:V_i\to\{1,\ldots,\chi_o(G_i)\}$
a coloring for $G_i$.
For $G_1\oslash  \ldots \oslash  G_k=(V,A)$ we define a mapping $c:V\to \{1,\ldots,\sum_{j=1}^{k}\chi_o(G_i) \}$ as follows.
\[
c(v)=\left\{ \begin{array}{ll}
c_1(v)                        &  {\rm\ if\ } v \in V_{G_1} \\

c_i(v)+\sum_{j=1}^{i-1}\chi_o(G_i)   &  {\rm\ if\ } v \in V_{G_i},~2\leq i\leq k. \\
\end{array}\right.
\]

The mapping $c$ satisfies the definition of an oriented coloring, because
no two adjacent vertices from $G_i$, $1\leq i \leq k$, have the same
color by assumption and  by definition
of $c$. For $1\leq i \neq j \leq k$
a vertex of $G_i$ and a vertex of $G_j$  are always adjacent, but
never colored equally by  definition of  $c$.

Further, the arcs between two color classes of every $G_i$, $1\leq i \leq k$,
have the same direction by  definition of  $c$. 
For $1\leq i \neq j \leq k$ the arcs between a
color class of $G_i$ and a color class of $G_j$ 
have the same direction by definition of the order operation.

$\chi_o(G_1\oslash  \ldots \oslash  G_k)\geq \chi_o(G_1) + \ldots + \chi_o(G_k)$

Since every $G_i$, $1\leq i \leq k$, is an induced subdigraph of the combined
graph, all values $\chi_o (G_1),\ldots, \chi_o(G_k)$ lead to
a lower bound for the number of necessary colors
of the combined graph by Lemma \ref{le-isubdigraph}. Further,
the order operations implies that for every $1\leq i\neq j\leq k$ no vertex in $G_i$ can be
colored in the same way as a vertex in $G_j$.
Thus, $\chi_o(G_1) + \ldots + \chi_o(G_k)$
leads to a lower bound for the number of necessary colors
of the combined graph.

\end{enumerate}
This shows the statements of the lemma.
\end{proof}

By Lemma \ref{le-dec}, we can solve oriented coloring
for oriented simple co-graphs and thus, also for subclasses, such as oriented
threshold graphs and transitive tournaments, in linear time.

\begin{proposition} Let $G$ be an oriented simple co-graph. Then, it holds that  
$\chi_o(G)=\chi(\un(G))=\omega(\un(G))$ and all values can be
computed in linear time.
\end{proposition}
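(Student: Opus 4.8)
The plan is to prove the two equalities and then read off the running time from the recursive construction. For $\chi(\un(G))=\omega(\un(G))$ I would argue as follows: by Observation~\ref{obs-oco1} the graph $\un(G)$ is a co-graph, and the clique number of a co-graph obeys exactly the recursion that Lemma~\ref{colo-und} records for the chromatic number --- namely $\omega(H_1\cup\cdots\cup H_k)=\max_i\omega(H_i)$, because a clique lies inside a single component, and $\omega(H_1\times\cdots\times H_k)=\sum_i\omega(H_i)$, because a clique meets every part --- with the common base value $1$ on a single vertex. Hence $\chi$ and $\omega$ agree on every co-graph; equivalently one may simply cite that co-graphs are perfect.

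Next I would establish $\chi_o(G)=\omega(\un(G))$. For the lower bound, the first condition of Definition~\ref{def-oc} says that every oriented colouring is in particular a proper colouring of $\un(G)$, so $\chi_o(G)\ge\chi(\un(G))=\omega(\un(G))$ holds for every oriented graph $G$. For the matching upper bound I would induct on the construction of $G$ as an oriented simple co-graph. The base case $G=\bullet$ is clear. If $G=G_1\oplus E$ with $E$ edgeless, then iterating the first statement of Lemma~\ref{le-dec} gives $\chi_o(G)=\chi_o(G_1)$, while $\un(G)=\un(G_1)\cup E$ gives $\omega(\un(G))=\omega(\un(G_1))$, so the induction hypothesis finishes the case. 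If $G=G_1\oslash E$ or $G=E\oslash G_1$ with $E$ edgeless, then $\chi_o(E)=1$ (no arcs), so by the third statement of Lemma~\ref{le-dec} we get $\chi_o(G)=\chi_o(G_1)+1$, whereas on the undirected side $\un(G)$ is the join of $\un(G_1)$ with an independent set, hence a maximum clique uses exactly one vertex of $E$ and $\omega(\un(G))=\omega(\un(G_1))+1$; the induction hypothesis again closes the case. (The degenerate situations in which $G_1$ itself is edgeless are covered by the same computation.)

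For the linear running time I would first build a di-co-tree of $G$ in linear time via \cite{CP06} (and, if convenient, make it canonical using Lemma~\ref{lem-level}), then evaluate it bottom-up: value $1$ at each leaf, the maximum of the children's values at each $\oplus$-node, and the sum of the children's values at each $\oslash$-node; by the induction above this value is exactly $\chi_o(G)=\chi(\un(G))=\omega(\un(G))$. Alternatively one may form $\un(G)$ and run the known linear-time colouring algorithm for co-graphs \cite{CLS81}. The step that needs the most care is the disjoint-union case, which genuinely relies on the \emph{simple} restriction: for general oriented co-graphs a disjoint union can strictly increase the oriented chromatic number (the second statement of Lemma~\ref{le-dec} is only a lower bound there, as already hinted at by $\overrightarrow{C_3}$ in the introduction), so it is essential that every $\oplus$ in the construction has an edgeless operand. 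The only other easy-to-miss point is that joining with an independent set raises both $\omega$ and $\chi$ by $1$, not by the size of that set.
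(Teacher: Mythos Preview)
Your proof is correct and follows the same idea the paper has in mind: the proposition is stated immediately after Lemma~\ref{le-dec} with only the remark ``By Lemma~\ref{le-dec}'', and your argument simply makes that recursion explicit, together with the standard fact that $\chi=\omega$ on co-graphs.

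One small slip in your closing commentary: you write that ``for general oriented co-graphs a disjoint union can strictly increase the oriented chromatic number'', citing $\overrightarrow{C_3}$. But $\overrightarrow{C_3}$ is \emph{not} an oriented co-graph (it is the forbidden subdigraph $D_5$ in Theorem~\ref{th-ch-oco}), and in fact the paper later proves in Corollary~\ref{le-dec2} that equality \emph{does} hold for all oriented co-graphs. The correct phrasing is ``for general oriented graphs''; your proof itself is unaffected, since at this point in the paper only Lemma~\ref{le-dec} is available and the simple restriction is exactly what lets you invoke part~(1) rather than the mere lower bound~(2).
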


It is not easy to generalize these results to oriented co-graphs.
To do so, we would need to compute the oriented chromatic number of the
disjoint union of two oriented co-graphs with at least two vertices.
But it is not possible to compute this oriented chromatic number of the
disjoint union of general oriented graphs from the  oriented chromatic numbers
of the involved graphs.
In~Lemma~\ref{le-dec}~(\ref{le-dec-b}) we only show a lower bound.
The following example proves that in general this can not be strengthened to equality.

\begin{example}\label{ex-disj-u2}
The two graphs $\overrightarrow{C_3}$ and $\overrightarrow{T_3}$
in Figure \ref{ex-spe} have the same oriented
chromatic number  $\chi_o(\overrightarrow{C_3})=\chi_o(\overrightarrow{T_3})=3$, but
their disjoint union needs more colors.
\end{example}

On the other hand, there are several examples for which  the disjoint union does not need more
than $\max(\chi_o(G_1),\chi_o(G_2))$ colors, such as  the union of two isomorphic oriented
graphs. By~Theorem~\ref{th-ch-oco}, we know that $\overrightarrow{T_3}$, shown in Figure  \ref{ex-spe}, is an oriented co-graph, but
$\overrightarrow{C_3}$, shown in Figure  \ref{ex-spe}, is not an oriented co-graph.
Consequently, the question arises whether oriented coloring could be closed under disjoint union,
when restricted to oriented co-graphs.

\subsection{Oriented Graph Coloring for Oriented Co-Graphs}

In order to solve $\OCN$ restricted to oriented co-graphs $G$ we created a procedure,
which is shown in Algorithm \ref{fig:algorithm3x}. 
The method
traverses a canonical di-co-tree $T$ for $G$ using a depth-first search,
such that for every inner vertex the children are visited from left to right.
For every inner vertex $u$ of $T$, we store two values ${\rm in}[u]$ and ${\rm out}[u]$.
These values ensure that the vertices of $G$, corresponding to the
leaves of the subtree, rooted at $u$ will we labeled
by labels $\ell$, such that  ${\rm in}[u]\leq \ell \leq {\rm out}[u]$.
For every leaf vertex $u$ of $T$, we additionally store the
label of the corresponding vertex of $G$ in ${\rm color}[u]$.
These~values lead to an optimal oriented coloring of $G$ by the next theorem.

\vspace{12pt}
\begin{algorithm}[H]
{\strut\footnotesize \bf procedure {\sc Label}$(G,u,i)$}
\footnotesize
\smallskip
\begin{tabbing}
xxxx \= xxxx \= xxxx \= xxxx \= \kill
{\bf if} ($u$ is a leaf of $T$)  \{  \\
\> ${\rm color}[u]=i$; ${\rm in}[u]=i$; ${\rm out}[u]=i$;\\
\} \\
{\bf else} \{  \\
\> ${\rm in}[u]=i$; ${\rm out}[u]=0$; \\
\> {\bf for all}   children $v$ of $u$ {\bf from left to right do} \{  \\
\> \> $j=${\sc Label}$(G,v,i)$; \\
\> \> {\bf if} (${\rm out}[u]<j$) \\
\> \> \> ${\rm out}[u]=j$; \\
\> \> {\bf if} ($u$ corresponds to a disjoint union) \\
\> \> \> $i={\rm in}[u]$; \\
\> \> {\bf else} \>\> $\blacktriangleright$  $u$ corresponds to an order operation \\
\> \> \> $i={\rm out}[v]+1$; \\
\> \} \\
\} \\
{\bf return} ${\rm out}[u]$;
\end{tabbing}
\normalsize
\label{fig:algorithm3x}
\caption{Computing an oriented coloring for an oriented co-graph.}  

\end{algorithm}

\begin{theorem}\label{algop}
Let $G$ be an oriented  co-graph. Then,  an optimal oriented coloring for $G$ and
$\chi_o(G)$  can be computed in linear time.
\end{theorem}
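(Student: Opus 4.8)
The plan is to prove correctness of Algorithm {\sc Label} by establishing three things: (i) the labelling produced is a valid oriented coloring; (ii) the number of colors used equals $\chi_o(G)$; and (iii) the algorithm runs in linear time. I would proceed by structural induction on the canonical di-co-tree $T$ of $G$, using Lemma~\ref{le-dec} as the engine. First I would fix the invariant maintained by the recursion: after a call ${\sc Label}(G,u,i)$ returns, every leaf of the subtree rooted at $u$ receives a color $\ell$ with ${\rm in}[u]=i\leq \ell\leq {\rm out}[u]$, the colors actually used on that subtree form the full contiguous block $\{i,i+1,\dots,{\rm out}[u]\}$, and the restriction of the coloring to the corresponding induced subdigraph $G_u$ is a valid oriented coloring using exactly $\chi_o(G_u)$ colors, hence ${\rm out}[u]-{\rm in}[u]+1=\chi_o(G_u)$.

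The inductive step splits into the two operation types. If $u$ corresponds to a disjoint union $G_u=G_1\oplus\cdots\oplus G_k$, the algorithm resets $i$ back to ${\rm in}[u]$ before each child, so all children are colored over the same starting block; by induction each child-subtree uses the block $\{i,\dots,i+\chi_o(G_j)-1\}$, so ${\rm out}[u]=i-1+\max_j\chi_o(G_j)$. I must then argue this is a valid oriented coloring of the union and that $\max_j\chi_o(G_j)$ colors suffice — this is exactly the nontrivial closure property that Example~\ref{ex-disj-u2} shows fails for general oriented graphs. Here I would invoke Observation~\ref{oc-tournament}: a $c$-coloring is a homomorphism into a tournament $H$ on $c$ vertices, and for oriented co-graphs one can choose a \emph{canonical target tournament} — concretely the transitive tournament $\overrightarrow{T_c}$ works when the coloring follows the di-co-tree, because along any directed arc of $G$ the color strictly increases (order operations shift colors upward, disjoint unions never create arcs), so the two arcs $(u,v),(x,y)$ with $c(v)=c(x)$ condition is automatically met. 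Combined with Lemma~\ref{le-dec}(\ref{le-dec-b}) for the matching lower bound, equality follows. If $u$ corresponds to an order composition $G_u=G_1\oslash\cdots\oslash G_k$, the algorithm sets $i={\rm out}[v]+1$ after each child, so the blocks of consecutive children are disjoint and consecutive, giving ${\rm out}[u]-{\rm in}[u]+1=\sum_j\chi_o(G_j)$; validity and optimality then follow directly from Lemma~\ref{le-dec}(3).

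For the running time, the key observation is that a canonical di-co-tree has $\bigo(n)$ leaves and, because inner labels alternate, $\bigo(n)$ inner nodes in total, and each node is touched once by the depth-first traversal with $\bigo(1)$ work per child edge; by Lemma~\ref{lem-level} such a canonical tree is obtainable in linear time (and a di-co-tree itself in linear time by \cite{CP06}), so the whole procedure is linear. The main obstacle is step (ii) in the disjoint-union case: proving that reusing the same color block across the branches of a $\oplus$-node never violates the second oriented-coloring condition. I expect the cleanest route is the homomorphism viewpoint — showing by induction that every oriented co-graph $G$ with $\chi_o(G)=c$ admits a homomorphism into the transitive tournament $\overrightarrow{T_c}$ that is ``monotone'' with respect to the tree — since then disjoint union of two such homomorphisms into a common $\overrightarrow{T_{\max}}$ is immediately a homomorphism, and the order composition corresponds to concatenating transitive tournaments, which is again transitive. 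Once this structural fact is in hand, correctness of the algorithm is a routine unwinding of the recursion against the invariant.
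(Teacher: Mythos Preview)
Your proposal is correct and follows essentially the same approach as the paper: structural induction along the canonical di-co-tree, with the same invariant ${\rm out}[u]-{\rm in}[u]+1=\chi_o(G_u)$, Lemma~\ref{le-dec} supplying the order-composition case and the lower bound for disjoint union, and the crucial validity argument for the union case resting on the observation that the algorithm makes ${\rm color}[u]<{\rm color}[v]$ for every arc $(u,v)$. Your homomorphism-into-$\overrightarrow{T_c}$ framing is just a repackaging of that monotonicity fact; the paper states it directly without invoking Observation~\ref{oc-tournament}, but the content is the same.
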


\begin{proof}
Let $G=(V,A)$ be an oriented co-graph. Using the method of \cite{CP06} we can
build a di-co-tree $T$ with root $r$ for $G$ in linear time. Further by Lemma \ref{lem-level},
we can
assume that $T$ is a canonical di-co-tree. For some node $u$ of $T$
we define by $T_u$ the subtree of $T$ which is rooted at $u$ and by
$G_u$ the subgraph of $G$ which is defined by $T_u$.
Obviously, for every vertex $u$ of $T$ the tree $T_u$ is a di-co-tree for the
digraph $G_u$ which is also an oriented co-graph.

Next, we verify that procedure {\sc Label}$(G,r,1)$, shown in Algorithm \ref{fig:algorithm3x},
returns the value $\chi_o(G)$ and computes an oriented coloring for $G$ within array
${\rm color}[u]$.
Therefore, we recursively show for every vertex $u$ of $T$ that after performing {\sc Label}$(G,u,i)$
for all leaves $u$ of $T_u$ the value ${\rm color}[u]$
leads to an oriented coloring of $G_u$ using the
colors $\{i={\rm in}[u],\ldots, {\rm out}[u]\}$  (Please note that using colors starting at values greater
than 1 is not a contradiction to Definition \ref{def-oc}.)
and the value ${\rm out}[u]-{\rm in}[u]+1$ leads to the oriented chromatic number of $G_u$.

We distinguish the following three cases depending on the type of operation
corresponding to the vertices $u$ of $T$.
\begin{itemize}
\item If $u$ is a leaf of $T$, then ${\rm color}[u]={\rm out}[u]={\rm in}[u]$
by the algorithm leads to an oriented coloring of~$G_u$.

Further, ${\rm out}[u]-{\rm in}[u]+1=1$, which obviously
corresponds to the  oriented chromatic number of $G_u$.

\item Let $u$ be an inner vertex of $T$ which corresponds to an order operation
and $u_1,\ldots,u_\ell$ are the children of $u$ in $T$.

We already know that the oriented colorings of $G_{u_i}$, $1\leq i \leq \ell$,
are feasible. Further, for $1\leq i \neq j \leq \ell$, the algorithm's way
of working ensures that a vertex from $G_{u_i}$ and a vertex
from  $G_{u_j}$ are never colored equally in $G_u$.
For $1\leq i \neq j \leq \ell$, the arcs between a
color class of $G_{u_i}$ and a color class of  $G_{u_j}$ 
have the same direction by the definition of the order operation.

By the algorithm,
value ${\rm out}[u]-{\rm in}[u]+1$
is equal to $\sum_{i=1}^\ell \chi_o(G_{u_i})$. By Lemma \ref{le-dec},
we conclude that ${\rm out}[u]-{\rm in}[u]+1$
is equal to $\chi_o(G_{i_1}\oslash \ldots \oslash G_{i_\ell})=\chi_o(G_u)$.

\item Let $u$ be an inner vertex of $T$ which corresponds to a disjoint union operation
and $u_1,\ldots,u_\ell$ are the children of $u$ in $T$.

We already know that the oriented colorings of $G_{u_i}$, $1\leq i \leq \ell$,
are feasible. Since a disjoint union operation does not create any arcs,
no two adjacent vertices have the same color in $G_u$.
Further, our method ensures that for every arc $(u,v)$ in $G$ it holds that ${\rm color}[u]
< {\rm color}[v]$. Thus, all arcs between two color classes in $G_u$ have the same
direction.

By the algorithm,  value ${\rm out}[u]-{\rm in}[u]+1$
is equal to $\max(\chi_o (G_1), \ldots, \chi_o(G_\ell))$. By Lemma \ref{le-dec},
we conclude that ${\rm out}[u]-{\rm in}[u]+1\leq \chi_o(G_1 \oplus \ldots\oplus G_\ell)=\chi_o(G_u)$.
The relation ${\rm out}[u]-{\rm in}[u]+1\geq \chi_o(G_1 \oplus \ldots\oplus G_\ell)=\chi_o(G_u)$
holds by the feasibility of our oriented coloring.

\end{itemize}
By applying the invariant for $u=r$, the statements of the theorem
follow.
\end{proof}

\begin{example}\label{ex-algo}We illustrate the method given  in Algorithm \ref{fig:algorithm3x}
by computing an oriented coloring for the oriented co-graph $G$, which is given by the
canonical di-co-tree $T$ of  Figure \ref{ex-cot}. On the left of each vertex $u$ of $T$,
the values  ${\rm in}[u]$  and ${\rm out}[u]$ are given. An optimal oriented coloring for $G$ is
given in blue letters below the leaves of $T$. The~root $r$ of $T$
leads to $\chi_o(G)={\rm out}[r]=5$.
\end{example}

\begin{figure}[ht]
\centering
{\includegraphics[width=\textwidth]{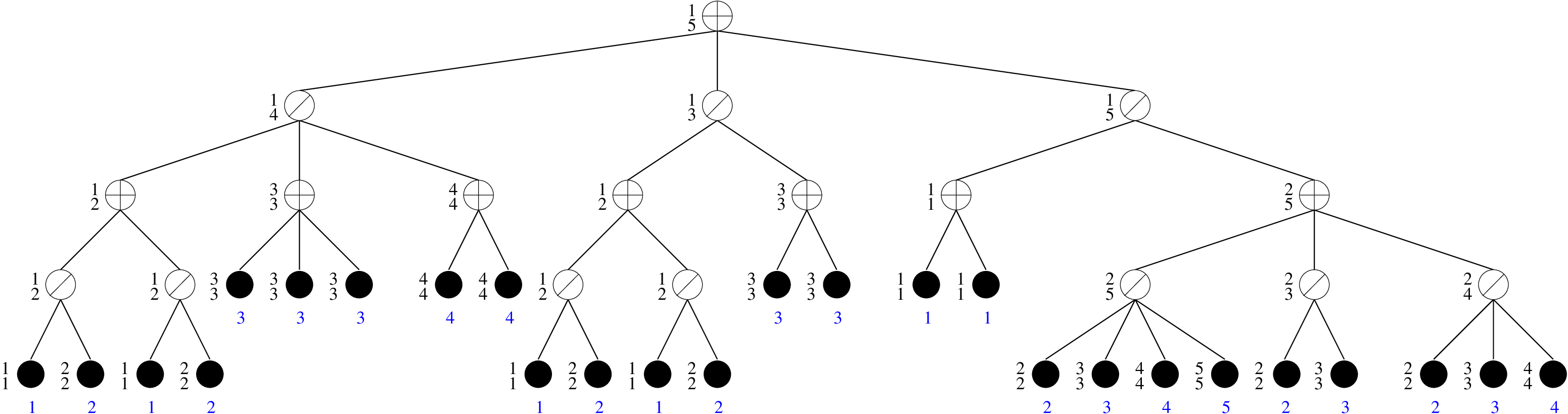}}
\vspace{3pt}
\caption{Canonical di-co-tree $T$ for oriented co-graph $G$.}\label{ex-cot}
\end{figure}

Next, we can improve the result of Lemma \ref{le-dec}~(\ref{le-dec-b})
for oriented co-graphs.

\begin{corollary}\label{le-dec2}
Let $G_1,\ldots,G_k$ be $k$ vertex-disjoint oriented co-graphs. Then, it holds that
$$\chi_o(G_1 \oplus \ldots \oplus G_k) = \max(\chi_o (G_1), \ldots,\chi_o(G_k)).$$
\end{corollary}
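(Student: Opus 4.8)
The plan is to derive this directly from the algorithmic result in Theorem \ref{algop} together with the lower bound already established in Lemma \ref{le-dec}(\ref{le-dec-b}). The lower bound $\chi_o(G_1 \oplus \ldots \oplus G_k) \geq \max(\chi_o(G_1),\ldots,\chi_o(G_k))$ is exactly Lemma \ref{le-dec}(\ref{le-dec-b}) and requires nothing new, so the whole content is the upper bound $\chi_o(G_1 \oplus \ldots \oplus G_k) \leq \max(\chi_o(G_1),\ldots,\chi_o(G_k))$.

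For the upper bound I would argue as follows. Since each $G_i$ is an oriented co-graph, so is $G := G_1 \oplus \ldots \oplus G_k$, by Definition \ref{ocog}. Build a canonical di-co-tree $T$ for $G$ whose root $r$ is a disjoint-union node with children the roots of canonical di-co-trees $T_1,\ldots,T_k$ for $G_1,\ldots,G_k$ (if some $G_i$ is itself given by a tree whose root is a disjoint union, merge it into $r$ using associativity and Lemma \ref{lem-level}; this only makes $r$ have more children, and the final bound is unaffected since it is a maximum). Now run Algorithm {\sc Label}$(G,r,1)$. By the invariant proved in Theorem \ref{algop}, for each child $u_i$ of $r$ the call produces a feasible oriented coloring of $G_{u_i} = G_i$ using ${\rm out}[u_i]-{\rm in}[u_i]+1 = \chi_o(G_i)$ colors; moreover, inspecting the algorithm, at a disjoint-union node the starting index $i$ is reset to ${\rm in}[u]$ before each child, so every child is colored with the palette $\{{\rm in}[r],\ldots,{\rm in}[r]+\chi_o(G_i)-1\}$, and hence ${\rm out}[r] = {\rm in}[r] + \max_i \chi_o(G_i) - 1$. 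Theorem \ref{algop} guarantees this coloring is a valid oriented coloring of $G$, so $\chi_o(G) \leq {\rm out}[r]-{\rm in}[r]+1 = \max(\chi_o(G_1),\ldots,\chi_o(G_k))$. Combining with the lower bound gives equality.

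The one point that genuinely needs care — and which I expect to be the main (minor) obstacle — is verifying that reusing the same color palette across the parallel components $G_1,\ldots,G_k$ cannot violate the second condition in Definition \ref{def-oc} (the "no arc from color class $a$ to color class $b$ and also from $b$ to $a$" condition). This is exactly what the disjoint-union case in the proof of Theorem \ref{algop} handles: the algorithm maintains the stronger property that every arc $(x,y)$ of $G$ satisfies ${\rm color}[x] < {\rm color}[y]$, so all arcs between color classes point from the lower-indexed class to the higher-indexed one, uniformly across all of $G$; in particular no conflict arises between an arc inside $G_i$ and an arc inside $G_j$. Since there are no arcs at all between distinct $G_i$ and $G_j$ (disjoint union), the only arcs to check lie within a single $G_i$, where feasibility is already known. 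Thus the corollary follows, and I would simply cite Theorem \ref{algop} and Lemma \ref{le-dec}(\ref{le-dec-b}) rather than repeat this verification.
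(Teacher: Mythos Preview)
Your proposal is correct and follows essentially the same approach as the paper: both derive the corollary directly from Theorem \ref{algop} (and its proof), using the algorithm's behavior at a disjoint-union root together with the lower bound from Lemma \ref{le-dec}(\ref{le-dec-b}). The paper's proof is terser, simply pointing to the disjoint-union case in the proof of Theorem \ref{algop}, while you spell out the details (the reset of the starting index and the monotonicity invariant ${\rm color}[x] < {\rm color}[y]$), but these are exactly the facts established there.
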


\begin{proof}
Let $G=G_1 \oplus \ldots \oplus G_k$ be an oriented co-graph and $T$ be a
di-co-tree  with root $r$ for $G$.
The~method given  in Algorithm \ref{fig:algorithm3x} computes
an oriented coloring using $\chi_o(G)=\chi_o(G_1 \oplus \ldots\oplus G_k)$
colors. Further, the proof of Theorem \ref{algop} shows that
$\chi_o(G_1 \oplus \ldots \oplus G_k) = \max(\chi_o (G_1), \ldots,\chi_o(G_k))$.
\end{proof}

\begin{corollary}
Let $G$ be an oriented  co-graph.  The following properties are equivalent:
\begin{enumerate}
\item $G$ is an oriented clique.
\item $G$ has a di-co-tree, which does not use any
disjoint union operation.
\item $G$ is a transitive
tournament.
\end{enumerate}
\end{corollary}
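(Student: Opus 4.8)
The plan is to prove the cycle of implications $(1)\Rightarrow(3)\Rightarrow(2)\Rightarrow(1)$, using the structural results already established for oriented co-graphs together with Lemma \ref{le-dec} and Corollary \ref{le-dec2}.

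First I would handle $(1)\Rightarrow(3)$: suppose $G$ is an oriented clique, i.e.\ $\chi_o(G)=|V|$. Since $G$ is an oriented co-graph, it is in particular transitive by Theorem \ref{th-ch-oco}, and its underlying graph is a co-graph by Observation \ref{obs-oco1}; so it suffices to show $G$ is a tournament, because a transitive tournament on $|V|$ vertices is the unique one (up to isomorphism) and clearly has oriented chromatic number $|V|$. To see $G$ is a tournament, consider a canonical di-co-tree $T$ for $G$ and argue by induction on its structure that if the root is a disjoint union $G = G_1 \oplus \ldots \oplus G_k$ with $k \ge 2$, then by Corollary \ref{le-dec2} we get $\chi_o(G) = \max_i \chi_o(G_i) \le \max_i |V_i| < |V|$, contradicting that $G$ is an o-clique (note $\max_i |V_i| < |V|$ since each $V_i$ is a proper subset, as $k\ge2$ and all parts are nonempty). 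Hence the root of $T$ must be an order operation, and recursively every inner node must be an order operation; a di-co-tree all of whose inner nodes are order operations defines exactly a transitive tournament. This simultaneously gives $(1)\Rightarrow(2)$ (the di-co-tree uses no disjoint union) and $(1)\Rightarrow(3)$.

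Next, $(3)\Rightarrow(2)$ is immediate: a transitive tournament on $n$ vertices is built from single vertices using only the order composition ($\overrightarrow{T_n} = \bullet \oslash \bullet \oslash \ldots \oslash \bullet$), which is a di-co-tree with no disjoint union operation. Finally, for $(2)\Rightarrow(1)$, suppose $G$ has a di-co-tree $T$ using only order operations. Then by repeated application of Lemma \ref{le-dec}(3), $\chi_o(G)$ equals the sum of $\chi_o$ over the leaves, and each leaf contributes $\chi_o(\bullet)=1$; hence $\chi_o(G) = |V|$, so $G$ is an oriented clique by definition.

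I do not expect a serious obstacle here, since all the heavy lifting is done by the earlier results; the one point needing slight care is the induction in $(1)\Rightarrow(3)$, namely arguing that a \emph{canonical} di-co-tree for an o-clique cannot contain any disjoint-union node at all (not merely that the root is not one) — this follows because any disjoint-union node $u$ would, by Corollary \ref{le-dec2} applied to $G_u$ and Lemma \ref{le-isubdigraph}, force $\chi_o(G_u) < |V(G_u)|$, and then $G_u$ being an induced subdigraph with a "deficient" oriented chromatic number propagates a deficiency up to the root (more directly: one can just observe that if $G$ is a tournament then so is every induced subdigraph, and a tournament that is an oriented co-graph is transitive, hence is $\overrightarrow{T_n}$, hence has no disjoint union in its canonical tree). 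Alternatively, and most cleanly, I would prove $(1)\Rightarrow(2)$ by contraposition: if the canonical di-co-tree contains a disjoint union, take a topmost such node; the whole graph is then an order composition of pieces one of which contains this disjoint union, and combining Lemma \ref{le-dec}(3) with Corollary \ref{le-dec2} on that piece yields $\chi_o(G) < |V|$.
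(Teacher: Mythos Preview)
Your proposal is correct and follows precisely the route the paper intends: the corollary is stated in the paper without an explicit proof, as an immediate consequence of Lemma~\ref{le-dec} and Corollary~\ref{le-dec2}, and your argument spells out exactly this derivation. The cycle $(1)\Rightarrow(2)\Rightarrow(1)$ via those two results, together with the trivial $(2)\Leftrightarrow(3)$, is the natural reading of why the paper places this statement as a corollary right after Corollary~\ref{le-dec2}.
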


Further characterizations for transitive tournaments
and oriented co-graphs, which are oriented cliques,
can be found in \cite{Gou12}~(Chapter 9).

As mentioned in Observation
\ref{oc-tournament}, oriented coloring of an oriented graph $G$ can be characterized by the
existence of homomorphisms to tournaments.
These tournaments are not necessarily transitive and $G$ is not necessarily homomorphically equivalent to some tournament.
For oriented co-graphs we can show a deeper result.

\begin{corollary}\label{cor-hom-tt}
There is an oriented $k$-coloring of an oriented co-graph $G$
if and only if there is a homomorphism from $G$ to some transitive tournament $\overrightarrow{T_k}$ on $k$ vertices.
Further, the oriented chromatic number of an oriented co-graph $G$ is the minimum number $k$, such that
$G$ is homomorphically equivalent with the transitive tournament~$\overrightarrow{T_k}$.
\end{corollary}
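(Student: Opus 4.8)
The plan is to derive both parts from the explicit algorithm of Theorem~\ref{algop}, together with an inductive construction of a homomorphism \emph{into} an oriented co-graph from a transitive tournament of the appropriate size. For the biconditional, the implication ``$\Leftarrow$'' is immediate and uses nothing about co-graphs: a homomorphism $G\to\overrightarrow{T_k}$ is a homomorphism to a tournament on $k$ vertices, hence an oriented $k$-coloring of $G$ by Observation~\ref{oc-tournament}. For ``$\Rightarrow$'', I would first note that $G$ has an oriented $k$-coloring if and only if $k\ge\chi_o(G)$, so it suffices to build a homomorphism $G\to\overrightarrow{T_{\chi_o(G)}}$ and compose it with the inclusion of $\overrightarrow{T_{\chi_o(G)}}$ as the induced subtournament of $\overrightarrow{T_k}$ on the first $\chi_o(G)$ vertices.

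For this last homomorphism I would use precisely the coloring $\mathrm{color}\colon V\to\{1,\dots,\chi_o(G)\}$ produced by {\sc Label}$(G,r,1)$ in the proof of Theorem~\ref{algop}. That proof already establishes more than optimality of $\mathrm{color}$: in the disjoint-union case it records $\mathrm{color}[x]<\mathrm{color}[y]$ for every arc $(x,y)$, in the order case the children receive increasing consecutive color ranges, and the leaf case is trivial; so by the same induction $\mathrm{color}[x]<\mathrm{color}[y]$ holds for \emph{every} arc $(x,y)$ of $G$. Identifying $\overrightarrow{T_m}$ with the digraph on $\{1,\dots,m\}$ whose arcs are the pairs $(i,j)$ with $i<j$, this says exactly that $\mathrm{color}$ is a homomorphism $G\to\overrightarrow{T_{\chi_o(G)}}$, which finishes ``$\Rightarrow$''.

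For the homomorphic-equivalence statement, put $k=\chi_o(G)$. The map $\mathrm{color}$ just discussed is a homomorphism $G\to\overrightarrow{T_k}$, so it remains to exhibit a homomorphism $\overrightarrow{T_k}\to G$. I would construct one by induction along a canonical di-co-tree of $G$ (available by Lemma~\ref{lem-level}): for a single vertex, $k=1$ and the claim is trivial; for $G=G_1\oplus\dots\oplus G_\ell$ choose $i$ with $\chi_o(G_i)=\max_j\chi_o(G_j)=k$ and compose the inductive homomorphism $\overrightarrow{T_k}\to G_i$ with the inclusion $G_i\hookrightarrow G$; for $G=G_1\oslash\dots\oslash G_\ell$ split the vertex set of $\overrightarrow{T_k}$ into consecutive blocks of sizes $\chi_o(G_1),\dots,\chi_o(G_\ell)$ (which sum to $k$ by Lemma~\ref{le-dec}), map the $i$-th block into $G_i$ by the inductive homomorphism $\overrightarrow{T_{\chi_o(G_i)}}\to G_i$, and check that arcs are preserved: arcs inside a block by the inductive homomorphisms, arcs from an earlier to a later block by the definition of the order composition, which inserts all arcs from $G_i$ to $G_j$ for $i<j$. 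Hence $G$ and $\overrightarrow{T_k}$ are homomorphically equivalent. For minimality, if $G$ is homomorphically equivalent with $\overrightarrow{T_m}$, then $G\to\overrightarrow{T_m}$ gives $\chi_o(G)\le m$ by ``$\Leftarrow$'', while composing $\overrightarrow{T_m}\to G$ with $G\to\overrightarrow{T_{\chi_o(G)}}$ yields an oriented $\chi_o(G)$-coloring of $\overrightarrow{T_m}$, so $m=\chi_o(\overrightarrow{T_m})\le\chi_o(G)$ because $\overrightarrow{T_m}$ is an oriented clique (by Lemma~\ref{le-isubdigraph} it is also in the underlying sense a clique, and $m$ colors trivially suffice); therefore $m=\chi_o(G)$.

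The one point that needs care is the middle step: verifying that the coloring of Theorem~\ref{algop} is not merely an oriented coloring but is \emph{monotone along every arc}, i.e.\ a homomorphism into the \emph{transitive} tournament rather than into some tournament as in Observation~\ref{oc-tournament}. This, however, is already implicit in the correctness proof of Theorem~\ref{algop}, so it amounts to re-reading that argument with the stronger conclusion recorded; the inductive construction of $\overrightarrow{T_{\chi_o(G)}}\to G$ is then routine given Lemma~\ref{le-dec}, and I do not anticipate a genuine obstacle beyond keeping the ``$<$''-orientation consistent throughout.
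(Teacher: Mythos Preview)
Your proof is correct and substantially more careful than the paper's. The paper's argument is essentially a one-liner: it asserts that for an oriented co-graph the color classes of \emph{any} oriented $k$-coloring already form a transitive tournament (whence $G\to\overrightarrow{T_k}$), and then simply states that for $k=\chi_o(G)$ a homomorphism $\overrightarrow{T_k}\to G$ exists. You instead work with the \emph{particular} coloring produced by {\sc Label} in Theorem~\ref{algop}, extract from its correctness proof the monotonicity $\mathrm{color}[x]<\mathrm{color}[y]$ along every arc, and then construct the reverse homomorphism $\overrightarrow{T_{\chi_o(G)}}\to G$ by explicit structural induction along a canonical di-co-tree.

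The difference is not merely cosmetic. The paper's assertion that an arbitrary oriented $k$-coloring of an oriented co-graph yields a transitive color tournament is false as stated: three disjoint arcs $a\!\to\! b$, $c\!\to\! d$, $e\!\to\! f$ form an oriented co-graph with $\chi_o=2$, yet the oriented $3$-coloring $a\mapsto 1$, $b,c\mapsto 2$, $d,e\mapsto 3$, $f\mapsto 1$ has color graph $\overrightarrow{C_3}$. Your route sidesteps this by never relying on an arbitrary coloring; once $G\to\overrightarrow{T_{\chi_o(G)}}$ is in hand, composition with the inclusion $\overrightarrow{T_{\chi_o(G)}}\hookrightarrow\overrightarrow{T_k}$ handles every $k\ge\chi_o(G)$. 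One minor remark: in the disjoint-union step of your reverse homomorphism you tacitly use $\chi_o(G_1\oplus\cdots\oplus G_\ell)=\max_j\chi_o(G_j)$, which is Corollary~\ref{le-dec2} rather than Lemma~\ref{le-dec}; since Corollary~\ref{le-dec2} precedes Corollary~\ref{cor-hom-tt}, this is harmless.
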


\begin{proof}
Within an  oriented co-graph $G=(V,A)$
the color classes $V_1,\ldots,V_k$ of an oriented $k$-coloring define a
transitive tournament $\overrightarrow{T_k}=(\{V_1,\ldots,V_k\},\{(V_i,V_j)~|~ v_i\in V_i, v_j\in V_j, (v_i,v_j)\in A\})$.
If~$k=\chi_o(G)$, then there is a homomorphism from $\overrightarrow{T_k}$ to $G$.
\end{proof}

\section{Longest Oriented Path for Oriented Graphs}

\begin{itemize}
\item[\textbf{Name}] Oriented Path ($\OP$)

\item[\textbf{Instance}]  An oriented graph $G=(V,A)$ and a positive integer $k \leq |V|$.

\item[\textbf{Question}]  Is there an oriented path of length at least $k$ in $G$?
\end{itemize}

We can bound the path length through the oriented chromatic number,
when considering oriented co-graphs. Please note that in the subsequent results
the oriented path $\overrightarrow{P_{k+1}}$ does not necessarily refer to an induced path
(though, its definition implies no chord on the path).

\begin{proposition}[\cite{HN04}]
A directed graph $G$ has a homomorphism to the transitive tournament $\overrightarrow{T_k}$ if and only if there is no
homomorphism of the oriented path $\overrightarrow{P_{k+1}}$ to $G$.
\end{proposition}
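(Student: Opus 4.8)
The plan is to prove the two directions of the biconditional directly, using the structure of transitive tournaments and paths. Recall $\overrightarrow{T_k}$ has vertices $1,\dots,k$ with an arc $(i,j)$ precisely when $i<j$, and $\overrightarrow{P_{k+1}}$ is the directed path on $k+1$ vertices.

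\medskip
\noindent\textbf{($\Rightarrow$, contrapositive).} Suppose there is a homomorphism $\varphi$ from $\overrightarrow{P_{k+1}}$ to $G$; I want to show $G$ has no homomorphism to $\overrightarrow{T_k}$. Assume for contradiction that $h: G \to \overrightarrow{T_k}$ is a homomorphism. Composing, $h\circ\varphi$ is a homomorphism from $\overrightarrow{P_{k+1}}$ to $\overrightarrow{T_k}$. But if $\overrightarrow{P_{k+1}}$ has vertices $v_1,\dots,v_{k+1}$ with arcs $(v_i,v_{i+1})$, then the defining property of $\overrightarrow{T_k}$ forces $(h\circ\varphi)(v_1) < (h\circ\varphi)(v_2) < \dots < (h\circ\varphi)(v_{k+1})$, a strictly increasing chain of length $k+1$ in $\{1,\dots,k\}$ --- impossible. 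Hence no such $h$ exists.

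\medskip
\noindent\textbf{($\Leftarrow$).} Suppose $G$ has no homomorphism to $\overrightarrow{T_k}$; I want to exhibit a homomorphism from $\overrightarrow{P_{k+1}}$ to $G$. The natural idea is to define, for each vertex $v$ of $G$, the quantity $\ell(v) = $ the number of vertices on a longest directed path in $G$ ending at $v$ (equivalently one plus the length of the longest path ending at $v$). If $G$ contained a directed cycle this could be infinite, so I first observe that a $G$ with a directed cycle (in particular any $\overrightarrow{C_n}$ or longer closed walk) trivially admits a homomorphism from $\overrightarrow{P_{k+1}}$ for every $k$ by wrapping the path around the cycle; so we may assume $G$ is acyclic and $\ell$ is well-defined and finite. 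The map $v \mapsto \ell(v)$ is then a homomorphism from $G$ into the transitive tournament on $\{1,2,\dots,m\}$ where $m = \max_v \ell(v)$, because any arc $(u,v)$ in $G$ yields $\ell(u) < \ell(v)$ (a longest path ending at $u$ extends along $(u,v)$ to one ending at $v$). Since by hypothesis $G$ has no homomorphism to $\overrightarrow{T_k}$, we must have $m \geq k+1$; take a vertex $v$ with $\ell(v) = m \geq k+1$ and a longest directed path ending at $v$: its first $k+1$ vertices form exactly a copy of $\overrightarrow{P_{k+1}}$ as a subdigraph of $G$, which gives the desired homomorphism.

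\medskip
\noindent\textbf{Main obstacle.} The delicate point is the cyclic case: one must handle it separately, since the longest-path function $\ell$ is only meaningful on acyclic digraphs, and the statement is about arbitrary directed graphs $G$. Once that case is dispatched (directed cycles always receive homomorphic images of arbitrarily long paths), the acyclic argument via the level function $\ell$ is routine. I would present the acyclic reduction first and then run the $\ell$-argument, noting that this is essentially the Gallai--Roy theorem phrased in homomorphism language, which is why the reference \cite{HN04} is cited.
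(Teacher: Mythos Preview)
The paper does not give its own proof of this proposition: it is quoted verbatim as a known result from \cite{HN04} and used as a black box to derive Corollary~\ref{cor-upper}. So there is nothing in the paper to compare your argument against.

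That said, your proof is correct and is exactly the standard argument one finds in \cite{HN04} (it is the Gallai--Roy--Vitaver theorem recast in homomorphism language, as you note yourself). The composition argument for the forward direction is clean, and for the converse your case split into ``$G$ contains a directed cycle'' versus ``$G$ is acyclic, use the level function $\ell$'' is the right way to handle arbitrary digraphs. The only tacit assumption is finiteness of $G$ (otherwise $\ell$ can be unbounded even in an acyclic digraph), but that is the standing convention throughout the paper, so no issue there.
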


By Corollary \ref{cor-hom-tt} this leads to the next result.

\begin{corollary}\label{cor-upper}
The oriented chromatic number of an oriented co-graph $G$ is the minimum number $k$ such that  there is no
homomorphism of the oriented path $\overrightarrow{P_{k+1}}$ to $G$.
\end{corollary}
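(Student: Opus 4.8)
The plan is to obtain the statement as a direct chain of equivalences built from Corollary~\ref{cor-hom-tt} and the cited Proposition of~\cite{HN04}. First I would fix a positive integer $k$ and observe that, for an oriented co-graph $G$, we have $\chi_o(G)\le k$ if and only if $G$ admits an oriented $k$-coloring; by Corollary~\ref{cor-hom-tt} this is in turn equivalent to the existence of a homomorphism from $G$ to the transitive tournament $\overrightarrow{T_k}$. Here I only use that $\chi_o(G)$ is by definition the least number of colors in an oriented coloring, so the predicate ``$\chi_o(G)\le k$'' coincides with ``there is an oriented $k$-coloring of $G$''.

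Next I would apply the Proposition of~\cite{HN04}, which says that $G$ has a homomorphism to $\overrightarrow{T_k}$ if and only if there is no homomorphism of $\overrightarrow{P_{k+1}}$ to $G$. Combining the two steps yields, for every $k\ge 1$,
$$\chi_o(G)\le k \quad\Longleftrightarrow\quad \text{there is no homomorphism from } \overrightarrow{P_{k+1}} \text{ to } G .$$
To read off the ``minimum $k$'' formulation, I would note that the right-hand predicate is monotone in $k$: since $\overrightarrow{P_{k+1}}$ is a subdigraph of $\overrightarrow{P_{k'+1}}$ whenever $k'\ge k$, any homomorphism $\overrightarrow{P_{k'+1}}\to G$ restricts to a homomorphism $\overrightarrow{P_{k+1}}\to G$; contrapositively, once no homomorphism $\overrightarrow{P_{k+1}}\to G$ exists, none exists for any larger index. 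As $G$ is finite such a $k$ exists, so there is a well-defined minimum $k_0$ with no homomorphism $\overrightarrow{P_{k_0+1}}\to G$. Instantiating the displayed equivalence at $k=\chi_o(G)$ gives $k_0\le \chi_o(G)$, and at $k=k_0$ gives $\chi_o(G)\le k_0$; hence $\chi_o(G)=k_0$.

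I do not expect a genuine obstacle: the argument is pure bookkeeping once Corollary~\ref{cor-hom-tt} and~\cite{HN04} are in hand. The only point worth a sentence is the monotonicity and consequent well-definedness of $k_0$, which follows from the containment of oriented paths as subdigraphs (in the same spirit as Lemma~\ref{le-isubdigraph}, though phrased for homomorphisms rather than for $\chi_o$ directly).
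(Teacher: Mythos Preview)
Your proposal is correct and follows essentially the same route as the paper: the corollary is obtained by combining Corollary~\ref{cor-hom-tt} with the cited Proposition of~\cite{HN04}. You simply spell out the equivalence $\chi_o(G)\le k \Leftrightarrow G\to\overrightarrow{T_k}\Leftrightarrow \overrightarrow{P_{k+1}}\not\to G$ and the monotonicity step more carefully than the paper, which merely states that the result follows from Corollary~\ref{cor-hom-tt}.
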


In order to compute the length of a longest oriented
path $\ell(G)$ for an oriented graph $G$, we give the next result.

\begin{lemma}\label{le-ell}Let $G_1,\ldots,G_k$ be $k$ vertex-disjoint oriented graphs.
\begin{enumerate}

\item $\ell(G_1 \oplus \ldots\oplus G_k) = \max(\ell(G_1), \ldots, \ell(G_k))$

\item $\ell(G_1\oslash \ldots \oslash G_k)= \ell(G_1) + \ldots + \ell(G_k) + k-1$
\end{enumerate}
\end{lemma}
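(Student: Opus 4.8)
The plan is to prove both identities by establishing the two inequalities in each case, exploiting that $\ell$ is monotone under taking subdigraphs (analogous to Lemma \ref{le-isubdigraph}) and that oriented paths are connected. For part~(1), the inequality $\ell(G_1 \oplus \ldots \oplus G_k) \geq \max_i \ell(G_i)$ is immediate since each $G_i$ is an induced subdigraph of the disjoint union, so any longest oriented path in $G_i$ survives. For the reverse inequality, I would observe that an oriented path is a connected digraph, hence any oriented path in $G_1 \oplus \ldots \oplus G_k$ lies entirely within a single connected component, and therefore within a single $G_i$; thus its length is at most $\ell(G_i) \leq \max_j \ell(G_j)$. This gives equality.

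For part~(2), consider $G = G_1 \oslash \ldots \oslash G_k$. For the lower bound, take a longest oriented path $Q_i$ in each $G_i$, say with $\ell(G_i)$ arcs (so $\ell(G_i)+1$ vertices). By the definition of the order composition, every vertex of $G_i$ has an arc to every vertex of $G_j$ whenever $i<j$. Hence I can concatenate $Q_1, Q_2, \ldots, Q_k$ in this order, inserting between the last vertex of $Q_i$ and the first vertex of $Q_{i+1}$ the arc that $\oslash$ guarantees; this yields an oriented path with $\sum_{i=1}^k \bigl(\ell(G_i)+1\bigr) - 1 + (k-1)$... more carefully: it has $\sum_{i=1}^k (\ell(G_i)+1)$ vertices and therefore $\sum_{i=1}^k \ell(G_i) + k - 1$ arcs, giving $\ell(G) \geq \ell(G_1) + \ldots + \ell(G_k) + k-1$.

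The main work is the matching upper bound for part~(2). Let $P = (w_0, w_1, \ldots, w_m)$ be a longest oriented path in $G$, with $m = \ell(G)$ arcs. I would partition the vertices of $P$ according to which $G_i$ they lie in, and argue about the structure of $P \cap V(G_i)$. The key structural fact to extract is that, because all arcs between distinct $G_i$ and $G_j$ with $i<j$ point from $G_i$ to $G_j$ (and there are no arcs in the reverse direction), once the path $P$ leaves a ``block'' $G_i$ to enter some $G_j$ it can never return to any $G_{i'}$ with $i' \leq i$; more precisely, the sequence of block-indices along $P$ is nondecreasing, and $P$ visits the vertices of each $G_i$ in one contiguous stretch. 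Within such a contiguous stretch inside $G_i$, the sub-path is an oriented path in $G_i$ and hence has at most $\ell(G_i)$ arcs. Counting: if $P$ touches a subset $I \subseteq \{1,\ldots,k\}$ of blocks, the arcs internal to blocks number at most $\sum_{i \in I} \ell(G_i)$ and the arcs crossing between consecutive blocks number at most $|I| - 1 \leq k-1$; since $\ell(G_i) \geq 0$ for blocks not in $I$, we get $m \leq \sum_{i=1}^k \ell(G_i) + k-1$, completing the proof.

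I expect the contiguity/monotonicity claim for the block-indices along $P$ to be the only genuinely nontrivial point — it needs the observation that an oriented path is a \emph{directed} walk, so each arc $w_{t}w_{t+1}$ respects the orientation, and a crossing arc can only go from a lower-indexed block to a higher-indexed one, which forces the index sequence to be nondecreasing and hence each block to appear as a single interval. Everything else is bookkeeping, and the whole argument mirrors the structure of the proofs of Lemma \ref{le-dec} and Lemma \ref{colo-und}.
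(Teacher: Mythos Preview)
Your argument is correct. Note, however, that the paper states Lemma~\ref{le-ell} without proof, so there is no ``paper's own proof'' to compare against; your proposal supplies exactly the kind of direct argument one would expect, in the same spirit as the paper's proof of Lemma~\ref{le-dec}. The only place worth tightening is a phrasing detail in part~(1): an oriented path lies in a single connected component of $G_1\oplus\cdots\oplus G_k$, and every such component is contained in some $G_i$ (the $G_i$ themselves need not be connected), which is what you mean. For part~(2), your monotonicity-of-block-indices observation is precisely the right structural fact, and the final count $\sum_{i\in I}\ell(G_i)+|I|-1\le \sum_{i=1}^k\ell(G_i)+k-1$ follows since $\ell(G_i)\ge 0$ and $|I|\le k$.
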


\begin{theorem}\label{algopath}
Let $G$ be an oriented  co-graph. Then, the length of a longest oriented
path $\ell(G)$ can be computed in linear time.
\end{theorem}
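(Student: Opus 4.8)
The plan is to mimic the structure of the proof of Theorem \ref{algop}, replacing the oriented coloring algorithm by an analogous recursive computation over the canonical di-co-tree. First I would invoke \cite{CP06} to build a di-co-tree $T$ for $G$ in linear time, and Lemma \ref{lem-level} to make it canonical, also in linear time. The canonical tree has size linear in $|V(G)|$, so any algorithm performing $\bigo(1)$ work per node (aside from aggregating over children, which is $\bigo(\text{number of children})$) runs in linear time overall.

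Next I would describe the dynamic program: process $T$ bottom-up, and for each node $u$ store the value $\ell(G_u)$, where $G_u$ is the oriented co-graph defined by the subtree $T_u$. The recurrence is exactly Lemma \ref{le-ell}: if $u$ is a leaf, $\ell(G_u)=0$; if $u$ is a disjoint-union node with children $u_1,\dots,u_\ell$, then $\ell(G_u)=\max_i \ell(G_{u_i})$; and if $u$ is an order-composition node with children $u_1,\dots,u_\ell$, then $\ell(G_u)=\sum_i \ell(G_{u_i}) + \ell - 1$. Each of these takes time linear in the number of children of $u$, and summing over all nodes of $T$ gives linear total time. The value at the root $r$ is $\ell(G_r)=\ell(G)$, which is the desired output.

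The correctness of the recurrence is precisely the content of Lemma \ref{le-ell}, which I would cite directly rather than re-prove; a one-line induction on the structure of $T$ (leaves satisfy $\ell(\bullet)=0$ trivially, inner nodes by Lemma \ref{le-ell} applied to the already-computed values of the children) closes the argument. So there is essentially no new combinatorial obstacle here — the work was already done in Lemma \ref{le-ell}.

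The only point requiring a small amount of care is the same bookkeeping caveat that appears in Theorem \ref{algop}: one must ensure that the canonical di-co-tree is used so that the children of an order node genuinely realize the additive-plus-$(k-1)$ behavior and there is no interference between union and order levels, but since alternation of the two operations along root-to-leaf paths is exactly what ``canonical'' guarantees (Lemma \ref{lem-level}), and since both operations in Lemma \ref{le-ell} are associative in the required sense, this is immediate. Hence the main (and minor) obstacle is simply invoking the right preprocessing results; the rest is a routine bottom-up traversal. If an explicit longest oriented path is also wanted (not just its length $\ell(G)$), one can additionally store at each node a witnessing path and concatenate them at order nodes, which does not affect the linear running time.
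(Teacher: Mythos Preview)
Your proposal is correct and matches the paper's (implicit) approach: the paper states Theorem~\ref{algopath} without an explicit proof, relying on the recurrences of Lemma~\ref{le-ell} together with the linear-time di-co-tree construction of~\cite{CP06}, exactly as you outline. One minor remark: canonicity of the di-co-tree is not actually needed here, since the formulas in Lemma~\ref{le-ell} are valid for arbitrary $k$ and both operations are associative, but invoking Lemma~\ref{lem-level} does no harm.
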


\begin{proposition}\label{prop-pathle}
Let $G$ be an oriented  co-graph. Then, it holds that  $\ell(G)=\chi_o(G)-1$.
\end{proposition}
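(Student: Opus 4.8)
The plan is to prove $\ell(G)=\chi_o(G)-1$ by a double inequality, leaning on the recursive structure of oriented co-graphs and the two lemmas already available, namely Lemma~\ref{le-dec} for the oriented chromatic number and Lemma~\ref{le-ell} for the longest oriented path. Both quantities behave identically under the disjoint union operation (both take a maximum), so the disjoint union case is immediate once the inductive hypothesis is in place. The interesting case is the order composition, where $\chi_o$ adds up the values and $\ell$ adds up the values plus $k-1$; I want to check that these two bookkeeping formulas are consistent with the claimed shift by one.

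First I would set up an induction on the structure of a canonical di-co-tree $T$ for $G$ (existence and linear-time construction guaranteed by Lemma~\ref{lem-level}). The base case is a single vertex $\bullet$: here $\chi_o(\bullet)=1$ and the longest oriented path has length $0$ (a single vertex, no arcs), so $\ell(\bullet)=0=\chi_o(\bullet)-1$. For the inductive step, suppose $G=G_1\oplus\cdots\oplus G_k$ with each $G_i$ an oriented co-graph satisfying $\ell(G_i)=\chi_o(G_i)-1$. Then by Lemma~\ref{le-ell}(1) and Corollary~\ref{le-dec2} (or Lemma~\ref{le-dec}(\ref{le-dec-b}) combined with the equality from the algorithm),
\[
\ell(G)=\max_i \ell(G_i)=\max_i(\chi_o(G_i)-1)=\Big(\max_i \chi_o(G_i)\Big)-1=\chi_o(G)-1.
\]

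For the order composition case, suppose $G=G_1\oslash\cdots\oslash G_k$ with each $G_i$ an oriented co-graph satisfying the inductive hypothesis. By Lemma~\ref{le-ell}(2) and Lemma~\ref{le-dec}(3),
\[
\ell(G)=\sum_{i=1}^{k}\ell(G_i)+k-1=\sum_{i=1}^{k}(\chi_o(G_i)-1)+k-1=\sum_{i=1}^{k}\chi_o(G_i)-1=\chi_o(G)-1.
\]
Since a canonical di-co-tree is built exclusively from these two operations (with the leaves being single vertices), the induction covers every oriented co-graph, and applying the invariant at the root yields the claim. An alternative, perhaps even cleaner, route would avoid the explicit induction entirely: combine Corollary~\ref{cor-upper}, which says $\chi_o(G)$ is the least $k$ with no homomorphism $\overrightarrow{P_{k+1}}\to G$, with the observation that an oriented path of length $\ell(G)$ in $G$ is itself (with the identity map) a homomorphism $\overrightarrow{P_{\ell(G)+1}}\to G$, giving $\chi_o(G)\ge \ell(G)+1$; the reverse inequality $\chi_o(G)\le \ell(G)+1$ would then follow from exhibiting, for $k=\ell(G)+1$, a homomorphism from $G$ to $\overrightarrow{T_k}$, which is exactly what the coloring algorithm of Theorem~\ref{algop} does once one checks that the number of colors it uses never exceeds $\ell(G)+1$ — and that last check is again the $k=\ell$ recursive comparison above.

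I expect the main obstacle to be purely organizational rather than mathematical: making sure that the equality in the disjoint union case is genuinely an equality and not merely the lower bound of Lemma~\ref{le-dec}(\ref{le-dec-b}), i.e.\ that I invoke Corollary~\ref{le-dec2} (which was precisely established to upgrade that bound for co-graphs). Once that is cited correctly, both inductive steps are one-line arithmetic identities, so the proof is short. A minor subtlety worth stating explicitly is that $\ell$ measures path length in number of arcs (so a single vertex gives $0$), which is what makes the ``plus one'' match $\chi_o(\bullet)=1$; I would make that convention explicit at the start to avoid an off-by-one confusion.
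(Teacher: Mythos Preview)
Your proof is correct and follows essentially the same approach as the paper: a structural induction on the co-graph using Lemma~\ref{le-ell} for $\ell$ and Corollary~\ref{le-dec2} (resp.\ Lemma~\ref{le-dec}(3)) for $\chi_o$ in the disjoint-union and order-composition cases. The paper does not spell out the alternative homomorphism route you sketch at the end, but your primary argument matches theirs line for line.
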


\begin{proof}
The statement can be shown recursively on the structure of
an oriented  co-graph $G$.
\begin{itemize}
\item If $G=\bullet$, then it obviously holds $\ell(G)=0=\chi_o(G)-1$.

\item If $G=G_1 \oplus \ldots\oplus G_k$
$$
\begin{array}{lcll}
\ell(G) & =  & \ell(G_1 \oplus \ldots\oplus G_k) \\
   & =  &  \max(\ell(G_1), \ldots, \ell(G_k)) & \text{ by Lemma } \ref{le-ell} \\
   & =  &  \max(\chi_o (G_1)-1, \ldots,\chi_o(G_k)-1)                                        & \text{ by induction hypothesis}  \\
  & =  &  \max(\chi_o (G_1), \ldots,\chi_o(G_k))-1                                    \\
   & = & \chi_o(G_1 \oplus \ldots \oplus G_k)-1   & \text{ by Lemma } \ref{le-dec2}\\
    & = &\chi_o(G)-1
\end{array}
$$

\item If $G=G_1  \oslash \ldots \oslash G_k$
$$
\begin{array}{lcll}
\ell(G) & =  & \ell(G_1  \oslash \ldots \oslash G_k) \\
   & =  &  \ell(G_1) + \ldots + \ell(G_k) + k-1 & \text{ by Lemma } \ref{le-ell} \\
   & =  &   \chi_o(G_1)-1 +  \ldots +           \chi_o(G_k)-1   + k-1                            & \text{ by induction hypothesis}  \\
  & =  &   \chi_o(G_1) +  \ldots +           \chi_o(G_k)   -1                                   \\
   & = &  \chi_o(G_1\oslash \ldots \oslash G_k) -1  & \text{ by Lemma } \ref{le-dec2}\\
    & = &\chi_o(G)-1
\end{array}
$$
\end{itemize}
This shows the statements of the lemma.
\end{proof}

The previous lemma 
implies that for every oriented co-graph
the upper bound of Corollary~\ref{cor-upper} is~strict.

In order to state the next result, let $\omega(G)$
be the number of vertices in a largest clique in graph $G$.

\begin{corollary}\label{cor-last} Let $G$ be an oriented  co-graph, then
$\chi_o(G)=\ell(G)+1=\chi(\un(G))=\omega(\un(G))$ and all values can be
computed in linear time.
\end{corollary}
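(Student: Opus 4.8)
The plan is to assemble the chain of equalities from results already established in the paper, so that almost nothing new has to be proved. First I would recall that $\chi_o(G) = \ell(G)+1$ is exactly the content of Proposition \ref{prop-pathle}, so that equality is free. The computation of $\ell(G)$ in linear time is Theorem \ref{algopath}, and the computation of $\chi_o(G)$ in linear time is Theorem \ref{algop}; since $\chi(\un(G))$ and $\omega(\un(G))$ for a co-graph are each computable in linear time (the Proposition following Lemma \ref{colo-und}, together with the fact that a di-co-tree for $G$ yields a co-tree for $\un(G)$ in linear time), the ``all values can be computed in linear time'' clause will follow once the numerical equalities are in place.

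The only genuine work is showing $\chi_o(G) = \chi(\un(G)) = \omega(\un(G))$ for an oriented co-graph $G$. I would do this by induction on the structure of $G$ along its canonical di-co-tree, mirroring the case analysis in the proof of Proposition \ref{prop-pathle}. For $G = \bullet$ all three quantities equal $1$. For a disjoint union $G = G_1 \oplus \cdots \oplus G_k$, Corollary \ref{le-dec2} gives $\chi_o(G) = \max_i \chi_o(G_i)$; on the undirected side, $\un(G) = \un(G_1) \cup \cdots \cup \un(G_k)$, so Lemma \ref{colo-und}(1) gives $\chi(\un(G)) = \max_i \chi(\un(G_i))$, and since a maximum clique of a disjoint union lies entirely in one component, $\omega(\un(G)) = \max_i \omega(\un(G_i))$; the induction hypothesis then matches all three. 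For an order composition $G = G_1 \oslash \cdots \oslash G_k$, Lemma \ref{le-dec}(3) gives $\chi_o(G) = \sum_i \chi_o(G_i)$; the underlying undirected graph is $\un(G) = \un(G_1) \times \cdots \times \un(G_k)$ (the order composition adds, in particular, every edge between distinct parts, which is exactly the join), so Lemma \ref{colo-und}(2) gives $\chi(\un(G)) = \sum_i \chi(\un(G_i))$, and a maximum clique of a join is the union of maximum cliques of the parts, so $\omega(\un(G)) = \sum_i \omega(\un(G_i))$; again the induction hypothesis closes the case.

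The main obstacle, such as it is, is the bookkeeping observation that the order composition $\oslash$ becomes the join $\times$ after passing to the underlying undirected graph — one must check that it does not matter that $\oslash$ orients those edges, since $\un$ forgets the orientation, and that no edges are lost. Once that is noted, everything reduces to invoking Lemma \ref{colo-und}, Corollary \ref{le-dec2}, Lemma \ref{le-dec}, Proposition \ref{prop-pathle}, and Theorems \ref{algop} and \ref{algopath}. I would close by remarking that the linear-time claim for $\omega(\un(G))$ also follows from the general fact that the clique number of a co-graph is computable in linear time from its co-tree, so no separate algorithm is needed.
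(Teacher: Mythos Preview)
Your proposal is correct and follows essentially the same approach as the paper: the paper's proof invokes Proposition~\ref{prop-pathle} for $\chi_o(G)=\ell(G)+1$ and then simply says the remaining equalities ``follow by Lemma~\ref{colo-und}'', which is precisely the structural induction you spell out (matching the recursions for $\chi_o$ from Lemma~\ref{le-dec} and Corollary~\ref{le-dec2} against those for $\chi$ from Lemma~\ref{colo-und}, together with the standard clique-number recursions for $\omega$). You have just made explicit the details the paper leaves implicit, including the observation that $\un(G_1\oslash\cdots\oslash G_k)=\un(G_1)\times\cdots\times\un(G_k)$.
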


\begin{proof}
The first equality holds by Proposition \ref{prop-pathle} and remaining equality follows
by Lemma \ref{colo-und}.
\end{proof}

\section{Graph Isomorphism for Oriented Co-Graphs}

The isomorphism problem for undirected co-graphs has been shown to be solvable
in polynomial time in \cite{CLS81}. This result can be improved as follows.
Two undirected co-graphs $G_1$ and $G_1$ are isomorphic if and only if their canonical co-trees $T_1$
and $T_2$  are isomorphic. A canonical co-tree for a co-graph can be determined in linear time. Thus,
by applying a linear time isomorphism test for rooted labeled trees (cf. \cite{AHU74},~Section 3.2) on canonical
co-trees  for $G_1$ and $G_2$, one can decide in linear time whether $G_1$ and $G_2$ are isomorphic.

We consider the corresponding problem for oriented co-graphs.

\begin{itemize}
\item[\textbf{Name}] Oriented Co-Graph Isomorphism ($\OCI$)

\item[\textbf{Instance}]  Two oriented co-graphs $G_1=(V_1,A_1)$ and $G_2=(V_2,A_2)$.

\item[\textbf{Question}]  Are $G_1$ and $G_2$ isomorphic, i.e., is there a bijection $b:V_1 \to V_2$ such
that for all $u,v\in V_1$ it holds that $(u,v)\in A_1$ if and only if $(b(u),b(v))\in A_2$?
\end{itemize}

For oriented co-graphs, the  method using an isomorphism test
for rooted labeled trees on the co-trees does not work, since
the order of the vertices, which are representing order operations  in the di-co-tree,
must be preserved.
The procedure in Algorithm \ref{fig:algorithm4x} provides a solution for di-co-trees.

\begin{algorithm}[ht]
{\strut\footnotesize \bf procedure {\sc Test}$(T_1,T_2)$} 
\footnotesize
\smallskip
\begin{tabbing}
xxxx \= xxxx \= xxxx \= xxxx \= \kill
let $h$ be the height of $T_1$ and $T_2$  \\
{\bf for} $\ell=h$ {\bf downto} 0  {\bf do} \\
\> {\bf for all} vertices  $v$ on level  $\ell$ in $T_1$ {\bf from left to right do}   \\
\>\> {\bf if} ($v$ is a leaf) \\
\>\>\> ${\rm label}[v]=0$ \\
\>\> {\bf else}  \\
\>\>\> let $v_1,\ldots,v_r$ be the children of $v$ \\
\>\>\> ${\rm label}[v]=({\rm label}[v_1],\ldots,{\rm label}[v_r])$ \\
\>\>\> {\bf if} ($v$ corresponds to a union operation) \\
\>\>\> \> sort vector ${\rm label}[v]$ ascending\\
\> let $S_1$ be the sequence of all ${\rm label}[v]$ for all $v$ on level  $\ell$ in $T_1$ \\
\> {\bf for all}  vertices $v$ on level  $\ell$ in $T_2$ {\bf from left to right do}   \\
\>\> {\bf if} ($v$ is a leaf) \\
\>\>\> ${\rm label}[v]=0$ \\
\>\> {\bf else}  \\
\>\>\> let $v_1,\ldots,v_r$ be the children of $v$ \\
\>\>\> ${\rm label}[v]=({\rm label}[v_1],\ldots,{\rm label}[v_r])$ \\
\>\>\> {\bf if} ($v$ corresponds to a union operation) \\
\>\>\> \> sort vector ${\rm label}[v]$ ascending\\
\> let $S_2$ be the sequence of all ${\rm label}[v]$ for all $v$ on level  $\ell$ in $T_2$ \\
\> sort $S_1$ to obtain $S'_1$ and sort $S_2$ to obtain $S'_2$ \\
\> {\bf if} ($S'_1\neq S'_2$) \\
\>\> {\bf return} $false$\\
\> let $V_\ell$ be the set of all vectors on level $\ell$ in $T_1$ \\
\> find a bijection  $b: V_\ell \to \{1,\ldots, |V_\ell|\}$ \\
\> {\bf for all} vertices  $v$ on level  $\ell$ in $T_1$ {\bf do}    \\
\> \>   ${\rm label}[v]=b(v)$; \\
\> {\bf for all} vertices  $v$ on level  $\ell$ in $T_2$ {\bf do}    \\
\> \>   ${\rm label}[v]=b(v)$; \\

$\}$  \\
{\bf return} $true$;
\end{tabbing}
\caption{Testing graph isomorphism for two  oriented co-graphs given by  canonical di-co-trees.} 
\label{fig:algorithm4x}
\end{algorithm}

\begin{theorem}\label{algop2}
Let $G_1$ and $G_2$  be two oriented  co-graphs, then oriented co-graph  isomorphism
for $G_1$ and $G_2$ can be solved in linear time.
\end{theorem}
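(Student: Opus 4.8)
The plan is to prove that Algorithm \textsc{Test}, given in Figure~\ref{fig:algorithm4x}, correctly decides isomorphism of two oriented co-graphs $G_1$ and $G_2$ in linear time when they are given by canonical di-co-trees $T_1$ and $T_2$. First I would establish the structural reduction: two oriented co-graphs are isomorphic if and only if their canonical di-co-trees are isomorphic as rooted trees whose inner nodes are labelled by the operation type (disjoint union or order composition), \emph{and} where an isomorphism at an order node must preserve the left-to-right ordering of the children, whereas at a disjoint-union node the children may be permuted arbitrarily. The forward direction uses the fact that the di-co-tree encodes the construction; the backward direction uses that the operations $\oplus$ and $\oslash$ applied to isomorphic arguments (in the ordered sense for $\oslash$) yield isomorphic digraphs. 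Canonicity (Lemma~\ref{lem-level}) guarantees that the levels strictly alternate between the two operation types, so all inner nodes on a common level carry the same operation label; in particular the assumption that the root operations agree propagates to every level, which is what lets the algorithm compare level by level.

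Next I would prove correctness of the algorithm itself by an invariant argument over the main loop, which processes levels $\ell = h, h-1, \ldots, 0$ bottom-up. The invariant is: after processing level $\ell$, two vertices $v \in T_1$ and $w \in T_2$ on level $\ell$ receive the same integer label $b(\cdot)$ if and only if the subtrees $T_{1,v}$ and $T_{2,w}$ are isomorphic in the ordered-at-$\oslash$, unordered-at-$\oplus$ sense — equivalently, the subgraphs $G_{1,v}$ and $G_{2,w}$ are isomorphic oriented co-graphs. The base case ($\ell = h$, all leaves) is immediate since every leaf gets label $0$ and single vertices are isomorphic. For the inductive step, a vertex $v$ forms the vector $({\rm label}[v_1], \ldots, {\rm label}[v_r])$ of its children's already-computed canonical integer labels; if $v$ is a union node this vector is sorted (making it a permutation-invariant signature), while if $v$ is an order node it is left as is (preserving order). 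Two inner vertices on the same level thus have the same vector signature iff their child-multisets (resp.\ child-sequences) of subtree-isomorphism-classes coincide, which by the induction hypothesis is exactly the condition for their subtrees to be isomorphic. The algorithm then sorts the sequences $S_1, S_2$ of all level-$\ell$ signatures and compares them; if they differ it correctly returns $false$, and otherwise it radix-assigns a fresh canonical integer $b(v)$ to each distinct signature, re-establishing the invariant one level up. Applying the invariant at $\ell = 0$ to the two roots yields the theorem.

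For the running time I would follow the standard AHU analysis. A canonical di-co-tree for an $n$-vertex oriented co-graph has $\bigo(n)$ nodes and, since children are real subtrees, $\bigo(n)$ edges; it can be built in linear time by \cite{CP06} and made canonical in linear time by Lemma~\ref{lem-level}. The crucial point is that the total length of all label-vectors created on a given level equals the number of edges between that level and the one below, hence summing over all levels the total size of all vectors is $\bigo(n)$. Sorting the vectors at union nodes, and sorting the sequences $S_1$ and $S_2$, are done by radix sort (bucket sort) on the integer entries, which run in time linear in the total input size plus the alphabet range; since all labels used at level $\ell$ are integers in $\{0, \ldots, \bigo(n)\}$ and the reassignment step compresses them back to a contiguous range $\{1, \ldots, |V_\ell|\}$ before moving up, each level costs time linear in (number of level-$\ell$ nodes) + (number of edges to level $\ell{+}1$). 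Summing over levels gives $\bigo(n)$ overall.

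The main obstacle I expect is not the timing analysis but getting the structural equivalence exactly right at the union/order distinction and handling the degenerate cases cleanly. One must argue that a di-co-tree isomorphism that respects operation labels, preserves child order at $\oslash$-nodes, and is free to permute at $\oplus$-nodes corresponds precisely to a digraph isomorphism — the subtlety being that at an $\oslash$-node reversing the child order generally changes the digraph (all arcs go from earlier blocks to later ones), so order genuinely matters there, whereas at an $\oplus$-node there are no inter-block arcs at all. A secondary technical point is justifying the w.l.o.g.\ assumption that $T_1$ and $T_2$ have equal height and equal root operation: if the heights differ the graphs cannot be isomorphic (the height of a canonical di-co-tree is an isomorphism invariant, being determined by the alternation depth of the construction), and if one root is $\oplus$ and the other $\oslash$ one can detect this immediately (e.g.\ $G_1$ is weakly connected iff its canonical root is an $\oslash$-node or a leaf), so both assumptions are safe and checkable in linear time. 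Once the structural characterization is pinned down, the invariant proof and the linear-time bound are routine adaptations of the classical rooted-labeled-tree isomorphism algorithm.
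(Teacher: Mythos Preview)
Your proposal is correct and follows exactly the approach the paper intends: it argues correctness and linear running time of Algorithm \textsc{Test} in Figure~\ref{fig:algorithm4x}, which is the paper's stated modification of the AHU rooted-labelled-tree isomorphism test to handle the ordered/unordered distinction at $\oslash$- versus $\oplus$-nodes. In fact the paper states Theorem~\ref{algop2} without proof, so your write-up (structural reduction to constrained di-co-tree isomorphism, level-by-level invariant, and the radix-sort linear-time analysis) supplies precisely the details the paper leaves implicit.
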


\begin{proof}
  Let $G_1$ and $G_2$ be two oriented co-graphs with the corresponding di-co-trees $T_1$ and $T_2$ , which can be found in linear time with the method of \cite{CP06}.
  Moreover we can assume, that the di-co-trees are canonical by Lemma \ref{lem-level}.
  If two graphs are isomorphic, the two canonical di-co-trees must be isomorphic, too.
  W.l.o.g. assume that the height and the roots of $T_1$ and $T_2$ are equal.
  Then, if two trees are isomorphic, there must be a bijection from the vertices of $T_1$ of level $\ell$ to the vertices of $T_2$ of level $\ell$.
  We look at the procedure from Algorithm \ref{fig:algorithm4x}. 
  Under the given conditions, the operation of the vertices of level $\ell$ are either order compositions or disjoint unions for both trees.
  If the operation on level $\ell$ is a directed union, the labels of the children of each node on level $\ell$ are sorted.
  Otherwise, it is an order composition, where the order of the children cannot be changed, such that the labels of the children will stay in the same order.
  After visiting every vertex on level $\ell$, the vectors with the labels of the children are sorted in the sequences $S_1$ and $S_2$.
  With the method given in \cite{AHU74}~(Section 3.2)~the sorting can be done in linear time with respect to the number of edges from each vertex to its children.
  If both sequences are equal, the algorithm continues, since the isomorphism is satisfied for level $\ell +1$.
  If it is not, the ordered sequences will be different, such that the algorithm terminates and returns false.
  This is repeated for every level of both trees, except for level $0$, which is the root, where the operations are assumed to be equal, and level $h$, which is the first level the algorithm goes through. When the leaves on this level are labeled, there is nothing more to do, since these vertices have no children.
  The~isomorphism of level $h$ is checked on level $h-1$.
  Let $n$ be the number of vertices in $T_1$ and $T_2$ and $m$ the number of edges. Then, the algorithm needs $2n$ steps for looking at every vertex of both trees, additional to $2m$ steps for looking at the children of each vertex. Thus, it runs in linear time.
\end{proof}

\section{Conclusions and Outlook}\label{sec-con}

In this paper, we have considered vertex coloring
on oriented graphs. We were able to introduce linear time solutions for
the oriented coloring problem, longest oriented path problem and isomorphism problem
on oriented co-graphs.
Our solutions are based on computations along a (canonical) di-co-tree for the given input co-graphs.
Furthermore, it turns out that within oriented co-graphs, the
oriented chromatic number is equal to the length of a longest oriented path plus one.
This is a quite interesting result, as within undirected co-graphs even
for bipartite graphs the path length
can not be bounded by the chromatic number.

Further, on oriented co-graphs an independent set of largest size
can be computed in the same way as known for undirected co-graphs \cite{CLS81}.
Additionally,  a tournament subdigraph of largest size and
a partition of the vertex set
into a minimum number of tournaments can be computed by applying
the method for independent set or oriented coloring problem
on the reverse input graph.

Our result concerning the  vertex coloring on oriented co-graphs use a dynamic
programming along a di-co-tree for the given input co-graph.
A similar result can be shown by using the fact that oriented co-graphs
are transitive and thus do not contain the digraph
\mbox{$\bullet \rightarrow \bullet \rightarrow\bullet\leftarrow \bullet$}
as an induced subdigraph. Using a characterization of
Chv\'{a}tal \cite{Chv84,Maf03}, we know that oriented co-graphs
are perfectly orderable graphs and therefore  any greedy coloring from a topological ordering will be optimal in both, oriented and non-oriented sense.

The given dynamic programming solutions 
provide a useful basis for exploring the complexity of $\OCN$ related to width parameters.
It remains to consider the existence of an FPT-algorithm for $\OCN$ w.r.t.\ the parameter
directed tree-width as given in \cite{JRST01}.
Since the directed tree-width of a digraph is always less or equal the undirected
tree-width of the corresponding underlying undirected graph \cite{JRST01}, the FPT-algorithm
of Ganian \cite{Gan09} (see also Section \ref{sec-intro}) does not imply such a result.

In  \cite{Gan09}, Ganian has shown that $\OCN$ is DET-hard for classes of
oriented graphs, such that the underlying undirected class has bounded rank-width.
He used a reduction from the isomorphism problem for tournaments, which has been shown
to be DET-hard in \cite{Wag07}. The same reduction also works for several
linear width parameters, since these can define the  disjoint union of two arbitrarily
large cliques. Consequently, $\OCN$ is DET-hard for classes of
oriented graphs, such that the underlying undirected class has linear NLC-width
at most $2$, linear clique-width at most $3$, neighbourhood-width at most $2$, or
linear rank-width $1$. Further, $\OCN$ is DET-hard for classes of
oriented graphs, such that the underlying undirected class has NLC-width 1 or equivalently
clique-width $2$.
The complexity of $\OCN$ on oriented graphs, such that the underlying undirected class has linear NLC-width
at most $1$ (equivalently neighbourhood-width $1$) or linear clique-width at most $2$,
remains open, since these classes do not contain the  disjoint union of two arbitrarily
large cliques.

It also remains to generalize the shown results for oriented coloring
on oriented graphs of bounded directed clique-width as given in \cite{CO00,GWY16}.
By the existence of a  monadic second order logic formula it follows that  for every $c$
the problem $\OCN_{c}$ is fixed parameter tractable
w.r.t.\ the parameter directed clique-width.

For the more general problem $\OCN$  the existence of an XP-algorithm or even an FPT-algorithm
w.r.t.\ the directed clique-width of the input graph is still open.
%
Since the directed clique-width of a digraph is always greater or equal the undirected
clique-width of the corresponding underlying undirected graph \cite{GWY16}, the result
of Ganian \cite{Gan09} (see also Section \ref{sec-intro}) does not imply a hardness result.


\end{document}